\newcommand{\ie}{i.e.,\xspace}
\newcommand{\eg}{e.g.,\xspace}
\let\oldnl\nl
\newcommand{\nonl}{\renewcommand{\nl}{\let\nl\oldnl}}
\newcommand{\sys}{\textsl{\mbox{EnclaveTree}}\xspace}
\newcommand{\tabincell}[2]{\begin{tabular}{@{}#1@{}}#2\end{tabular}} 
  \providecommand\BibTeX{{%
    \normalfont B\kern-0.5em{\scshape i\kern-0.25em b}\kern-0.8em\TeX}}}
\begin{document}
\fancyhead{}
\title{EnclaveTree: Privacy-preserving Data Stream Training and Inference Using TEE}


\author{Qifan Wang}
\affiliation{%
  \institution{The University of Auckland}
  \city{Auckland}
  \country{New Zealand}}
\email{qwan301@aucklanduni.ac.nz}

\author{Shujie Cui}
\affiliation{%
  \institution{Monash University}
  \city{Melbourne}
  \country{Australia}}
\email{shujie.cui@monash.edu}

\author{Lei Zhou}
\affiliation{%
  \institution{Southern University of Science and Technology}
  \city{Shenzhen}
  \country{China}}
\email{zhoul6@sustech.edu.cn}

\author{Ocean Wu}
\affiliation{%
  \institution{The University of Auckland}
  \city{Auckland}
  \country{New Zealand}}
\email{hwu344@aucklanduni.ac.nz}

\author{Yonghua Zhu}
\affiliation{%
  \institution{The University of Auckland}
  \city{Auckland}
  \country{New Zealand}}
\email{yzhu970@aucklanduni.ac.nz}

\author{Giovanni Russello}
\affiliation{%
  \institution{The University of Auckland}
  \city{Auckland}
  \country{New Zealand}}
\email{g.russello@auckland.ac.nz}

\renewcommand{\shortauthors}{Wang, et al.}

\begin{abstract}
The classification service over a stream of data is becoming an important offering for cloud providers, but users may encounter obstacles in providing sensitive data due to privacy concerns. 
While Trusted Execution Environments (TEEs) are promising solutions for protecting private data, they remain vulnerable to \textit{side-channel attacks} induced by data-dependent access patterns. 
We propose a Privacy-preserving Data Stream Training and Inference scheme, called \textit{\sys}, that provides confidentiality for user's data and the target models against a compromised cloud service provider. 
We design a matrix-based training and inference procedure to train the Hoeffding Tree (HT) model and perform inference with the trained model inside the trusted area of TEEs, which provably prevent the exploitation of access-pattern-based attacks. 
The performance evaluation shows that \sys is practical for processing the data streams with small or medium number of features. When there are less than 63 binary features, \sys is up to ${\thicksim}10{\times}$ and ${\thicksim}9{\times}$ faster than na\"ive oblivious solution on training and inference, respectively. 

\end{abstract}

\begin{CCSXML}
<ccs2012>
   <concept>
       <concept_id>10002978.10003022</concept_id>
       <concept_desc>Security and privacy~Software and application security</concept_desc>
       <concept_significance>500</concept_significance>
       </concept>
 </ccs2012>
\end{CCSXML}

\ccsdesc[500]{Security and privacy~Software and application security}

\keywords{Data stream, hoeffding tree, SGX enclave, data-oblivious}

\maketitle

\section{Introduction} 
Machine learning (ML) applications such as remote healthcare and activity recognition, have attracted a lot of attention as a major breakthrough in the practice of ML. 
These specific applications of ML are characterized by \emph{data streams}: data is generated by various devices and usually arrive in a timely manner. 
For either training the ML model or inferring (\ie predicting or evaluating) an unlabelled instance by the model, the data stream should be processed efficiently on-the-fly as data might arrive rapidly. 
The Hoeffding Tree (HT) model \cite{domingos2000mining}, a variation of the decision tree model, has become the standard for processing data streams. 

To process data streams efficiently, a promising solution is to outsource the HT training and inference to cloud platforms \cite{nguyen2015survey,wang2019privstream}. However, this poses a severe threat to data privacy and model confidentiality. 
For privacy-sensitive applications, such as in the health-care domain, all the data samples, the model, the inference output, and any intermediate data generated during the model training and inference should be protected from the Cloud Service Provider (CSP). 
In particular, when training a HT, the main operation is to classify each newly arriving data sample with the \textit{current tree} and to count the frequency of different feature values. 
The access path over the tree and the statistical information generated when training the model, should be protected as they can be leveraged by an adversary to construct a near-equivalent HT \cite{tramer2016stealing}. 
\begin{table*}[!ht]
\footnotesize
\centering
 \caption{Comparison of decision tree training and inference protocols}
 \label{table:S1}
 \renewcommand\tabcolsep{3.5pt} 
    \begin{tabular}{ccccccccccc}
        \toprule
        \multirow{2}{*}{\textbf{Scheme}} & \multicolumn{3}{c}{\textbf{Support}} & \multicolumn{2}{c}{\textbf{Communication}} & \multirow{2}{*}{\textbf{Complexity on Client}} & \multicolumn{4}{c}{\textbf{Privacy}} \\
        
        \cmidrule(lr){2-4}\cmidrule(lr){5-6}\cmidrule(lr){8-11}
        & \textbf{DS} & \textbf{Training} & \textbf{Inference} & \textbf{Rounds} & \textbf{Bandwidth} & & \textbf{Data} & \textbf{IR} & \textbf{Model} & \textbf{AP} \\
        \midrule
        
        Du et al. \cite{du2002building}, Vaidya et al. \cite{vaidya2005privacy}, Samet et al. \cite{samet2008privacy} & \multirow{3}{*}{\XSolidBrush} & \multirow{3}{*}{$\checkmark$} & \multirow{3}{*}{\XSolidBrush} & \multirow{3}{*}{$\Omega(t_d)$} & \multirow{3}{*}{$\Omega({t_m}logn+n)$} & \multirow{3}{*}{$\Omega(({t_m}+1)n)$} & \CIRCLE & \CIRCLE & \Circle & \Circle \\
        
        \cline{1-1} \cline{8-11}
        Xiao et al. \cite{xiao2005privacy}, Emekci et al. \cite{emekcci2007privacy} & & & & & & & \CIRCLE & \LEFTcircle & \CIRCLE & \CIRCLE \\
        
        \cline{1-1} \cline{8-11}
        Hoogh et al. \cite{de2014practical}, Lindell et al. \cite{lindell2000privacy} & & & & & & & \CIRCLE & \CIRCLE & \CIRCLE & \CIRCLE \\
        \hline
        
        \tabincell{c}{Bost et al. \cite{bost2015machine}, Wu et al. \cite{wu2016privately}, Tai et al. \cite{tai2017privacy}, \\ Kiss et al. \cite{kiss2019sok} } & \multirow{2}{*}{\XSolidBrush} & \multirow{2}{*}{\XSolidBrush} & \multirow{2}{*}{$\checkmark$} & $c{\ge}2$ & \multirow{2}{*}{$\Omega(t_m+n)$} & \multirow{2}{*}{$\Omega(t_m+n)$} & \multirow{2}{*}{\CIRCLE} & \multirow{2}{*}{\CIRCLE} & \multirow{2}{*}{\LEFTcircle} & \multirow{2}{*}{\CIRCLE} \\
        
        \cline{1-1} \cline{5-5}
        Cock et al. \cite{de2017efficient} & &  &  & $t_b+3$ &  &  &  &  &  &  \\
        \hline
        
        Akavia et al. \cite{akavia2019privacy} & \XSolidBrush & $\checkmark$ & $\checkmark$ & $t_d$ & $O(t_m+n)$ & $O(t_m+n)$ & \CIRCLE & \CIRCLE & \CIRCLE & \CIRCLE\\
        \hline
        
        Liu et al. \cite{liu2020towards} & \XSolidBrush & $\checkmark$ & $\checkmark$ & $1$ & $O(n)$ & $O(n)$ & \CIRCLE & \CIRCLE & \LEFTcircle & \Circle \\
        \hline 
        \textbf{\sys} & $\checkmark$ & $\checkmark$ & $\checkmark$ & $1$ & $O(n)$ & $O(n)$ & \CIRCLE & \CIRCLE & \CIRCLE & \CIRCLE \\
        \bottomrule
    \end{tabular}
       
       \textbf{DS} denotes data stream.
       Privacy of data, intermediate results, model and access patterns are denoted by \textbf{Data}, \textbf{IR}, \textbf{Model} and \textbf{AP}, respectively.
       $\Omega(\cdot)$ and $O(\cdot)$ denote the computation complexity of each party in the distributed setting and client, respectively. 
       \CIRCLE, \LEFTcircle, and \Circle ~denote the target is protected, part of the parameters of the target are leaked, and fails to protect the target, respectively.
       $t_d$, $t_m$, $t_b$, $c$ and $n$ represents the tree's depth, the number of nodes, the binary representation length of data samples, constants and the number of data samples, respectively.
\end{table*}
%

\textit{Privacy-preserving data mining (PPDM)} aims to protect the privacy of outsourced ML tasks by employing cryptographic primitives, such as Secure Multi-Party Computation (SMC) \cite{lindell2000privacy,du2002building,vaidya2005privacy,xiao2005privacy,de2017efficient,zheng2019towards} or Homomorphic Encryption (HE) \cite{akavia2019privacy,liu2020towards,bost2015machine,wu2016privately}. 
Nevertheless, most of the existing PPDM approaches cannot be adopted to process data streams, because they:  
\ding{182} cannot process complicated functions such as logarithm and exponential operations in an efficient way, which are fundamental to the HT model training; 
\ding{183} impose too heavy computation and communication overheads on the clients;. 
\ding{184} leak statistical information and tree structures. 

Table~\ref{table:S1} summarizes the related work in this area. First of all, note that most of the existing approaches focus on generic decision trees and none of them can securely process data streams (column DS in Table~\ref{table:S1}). The approaches given in \cite{lindell2000privacy,du2002building,vaidya2005privacy,xiao2005privacy,emekcci2007privacy,samet2008privacy,de2014practical} are impractical for data streams because they require multiple rounds of interactions between client and server. While the approaches proposed in \cite{bost2015machine,wu2016privately,de2017efficient,zheng2019towards} leak information about the model, such as the structure and the number of nodes of the tree. 
To the best of our knowledge, \cite{wang2019privstream, xu2008privacy} are the only approaches that focus on data streams and can provide some level of protection for the data, the target model and the inference results.
The reason these works are not included in the table is because they do not focus on decision trees. Moreover, the main idea of these approaches is to randomly perturb the data distribution with noise. This approach is usually efficient but at the cost of accuracy loss due to a large amount of perturbations.
Furthermore, since only part of information is perturbed, the attacker can still compromise the user's privacy by retrieving the features through inference attacks \cite{aggarwal2005k}. 

\noindent\textbf{Our goals.}
In this work, we aim to design an outsourced approach to train and infer data streams with HT model in a secure and efficient manner. 
Specifically, our approach should not only protect all the data samples and the model from the CSP but also any intermediate data generated during the training and inference, such as the frequency of different feature values and the access pattern. 

\noindent\textbf{Challenges.}
To achieve the goals, we propose a privacy-preserving data stream classification scheme called \textit{\sys}. The basic idea of \sys is to employ the Intel Software Guard Extension (SGX) \cite{costan2016intel} to process privacy-sensitive operations on the CSP. 
Intel SGX is an extension of the x86 instruction set architecture that allows a user process to create trusted execution environments called \textit{enclaves} on the CSP. 
Recent work \cite{schuster2015vc3,ohrimenko2016oblivious,law2020secure,poddar2020visor} has demonstrated that SGX-based PPDM is orders of magnitude faster than cryptography-based approaches. 
Moreover, within an enclave, one can process any kind of operations securely and efficiently, including logarithm and exponentiation. 
However, using Intel SGX is non-trivial because it suffers from side-channel attacks, which enable an adversary to obtain the access pattern over HT and then infer secrets \cite{ohrimenko2016oblivious,rane2015raccoon,law2020secure}, \eg the tree structure. For instance, with controlled-channel attack \cite{xu2015controlled}, the adversary can learn which pages are accessed when classifying a data sample. By injecting enough malicious data samples, the adversary could recover the tree structure.  
Thus, the challenge of using Intel SGX is to protect the enclave access pattern.

The traditional method for data classification with tree models is to traverse the tree from the root to a leaf node by comparing a node with the corresponding feature value level by level. 
To protect the access pattern, a na\"ive solution can be implemented by accessing the node in each level obliviously. For instance, using a solution as proposed by \cite{law2020secure} we could store the nodes at each level of the tree as an array and then obliviously access the target node in the array to update the statistical information. 
However, this approach is costly. 

\noindent\textbf{Our Contributions.}
The contributions of this paper are threefold. 

First of all, we are the first to propose a secure and efficient scheme to process data streams for decision tree models in outsourced environments. As shown in Table~\ref{table:S1}, compared with existing PPDM schemes for the decision tree model, \sys not only achieves better communication and computation overhead, but also achieves better security guarantees.
To the best of our knowledge, \sys is the first scheme that can efficiently and securely process data streams with protection for data samples, the model, statistical information, and tree access pattern. Moreover, \sys imposes a very light overhead on client devices, where only standard encryption operations are required for outsourcing data samples for processing and decrypting the results after inference.

Our second contribution is a novel approach for tree classification based on matrix multiplications. 
Inspired by the approach in \cite{poddar2020visor}, \sys performs the HT training by periodically reading a batch of data samples, converting them into a matrix $\mathcal{M}_d$, transforms the current model into a matrix $\mathcal{M}_q$, and updates the frequency of different feature values by computing $\mathcal{M}_d \times \mathcal{M}_q$. 
The main advantage of our approach is that inherently it does not leak any access pattern and is more efficient than traversing the tree using oblivious operations. 
Similarly, \sys also  classifies unlabelled instances with a matrix multiplication. 


We implemented the prototype of \sys with OpenEnclave \cite{microsoftOE} and evaluated its performance. 
The results show that, \sys takes about 6.73, 29.4, and 134 seconds to process $5{\times}10^4$ data samples with 15, 31, 63 features respectively, which is ${10.4\times}$, $4.2{\times}$, $1.1{\times}$ faster than the na\"ive oblivious solution. 
As for HT inference, \sys takes 1.89, 2.80, and 4.72 milliseconds for inferring 100 unlabelled instances with a tree of depth 9, and outperforms the na\"ive oblivious solution by ${9.2\times}$, $7.2{\times}$, $6.5{\times}$ when there are 15, 31, 63 features, respectively. 

     

\section{Background}
\label{sec: background}
In this section, we provide background information on Intel SGX, side-channel attacks, and the oblivious primitives we use in the rest of this paper. 

\subsection{Intel SGX}

A Trusted Execution Environment (TEE), such as the Intel Software Guard Extensions (Intel SGX) \cite{costan2016intel}, protects sensitive data and code from privileged attackers who may control all the software, including the operating system and hypervisor. In Intel SGX-enabled machines, the CPU protects the confidentiality and integrity of code and data by storing them in an isolated memory region, called enclave. 
Intel SGX also supports remote attestation of an initialized enclave. It enables a remote party to verify an enclave identity and the integrity of the code and data inside the enclave.

\subsection{Side-channel Attacks on Intel SGX}
One issue of Intel SGX is that it still shares many resources with untrusted programs, \eg CPU cache and branch prediction units, and relies on the underlying OS for resource management. 
As a result, Intel SGX is susceptible to side-channel attacks. 
In recent years, various side channels have been extensively exploited to infer secrets from enclaves, such as L1 cache~\cite{moghimi2017cachezoom,GotzfriedESM17}, page tables~\cite{xu2015controlled,BulckWKPS17}, branch predictor~\cite{EvtyushkinRAP18,Bluethunder,SGKKP17}, and the transient execution mechanism~\cite{cacheout,Spectre,ridl}. 
They infer secrets by mainly exploiting the data-dependent enclave access pattern at different granularity. 
For instance, with cache-timing attacks, the adversary can learn the enclave access pattern at cache line granularity. 

Existing countermeasures are either hardware-based~\cite{OrenbachBS20,Strackxabs} or software-based~\cite{Varys,Hyperspace,OBFUSCURO}. Hardware-based solutions, such as cache partitioning~\cite{COLORIS} and enclave self-paging~\cite{OrenbachBS20}, are efficient yet they require hardware modifications, which take a long period to be applied and cannot be retrofitted to existing hardware. 
In contrast, software-based solutions are more flexible. However, they generally leverage expensive normalisation or randomisation techniques, making them impractical. 
For instance, OBFUSCURO~\cite{OBFUSCURO} leverages ORAM operations to perform
secure code execution and data access, which adds about $51\times$ overhead to enclaves. 
It is desirable to protect sensitive data and operations from side-channel attacks with techniques that are specific to the enclave.  

\subsection{Oblivious Primitives}
\label{subsubsec:obl_primitive}
A library of general-purpose oblivious primitives, operating solely on registers whose contents are restricted to the code outside the enclave, has been introduced in previous work \cite{ohrimenko2016oblivious,rane2015raccoon,law2020secure} and experimentally demonstrated that it is several orders of magnitude faster than previous ORAM-based approaches. 
In this work, we will use the following oblivious primitives:

\begin{itemize}
    \item \textbf{Oblivious comparison.} $\mathtt{ogreater}$ and $\mathtt{oequal}$, are used to compare variables and implemented with x86 instruction $\mathtt{cmp}$.
    \item \textbf{Oblivious selection.}  $\mathtt{oselect}$, allows to conditionally select an element.
    \item \textbf{Oblivious assignment.} $\mathtt{oassign}$, allows to conditionally assign variables. It specifically uses $\mathtt{CMOVZ}$ for equality comparisons and subsequently combine it with $\mathtt{oassign}$ to assign a value to the destination register.
    \item \textbf{Oblivious array access.} $\mathtt{oaccess}$ scans the array at cache-line granularity and obliviously load one element based on $\mathtt{oassign}$, and then is optimized with AVX2 vector instructions \cite{corparation2016intel}.
\end{itemize}

\section{Overview of Our Approach}
\label{sec: system_model}
In this section, we will describe the system  model and design overview. We will conclude the section with discussing the threat model. 

\subsection{System Model}

\begin{figure}[!t]
\centering
\includegraphics[width=0.7\linewidth]{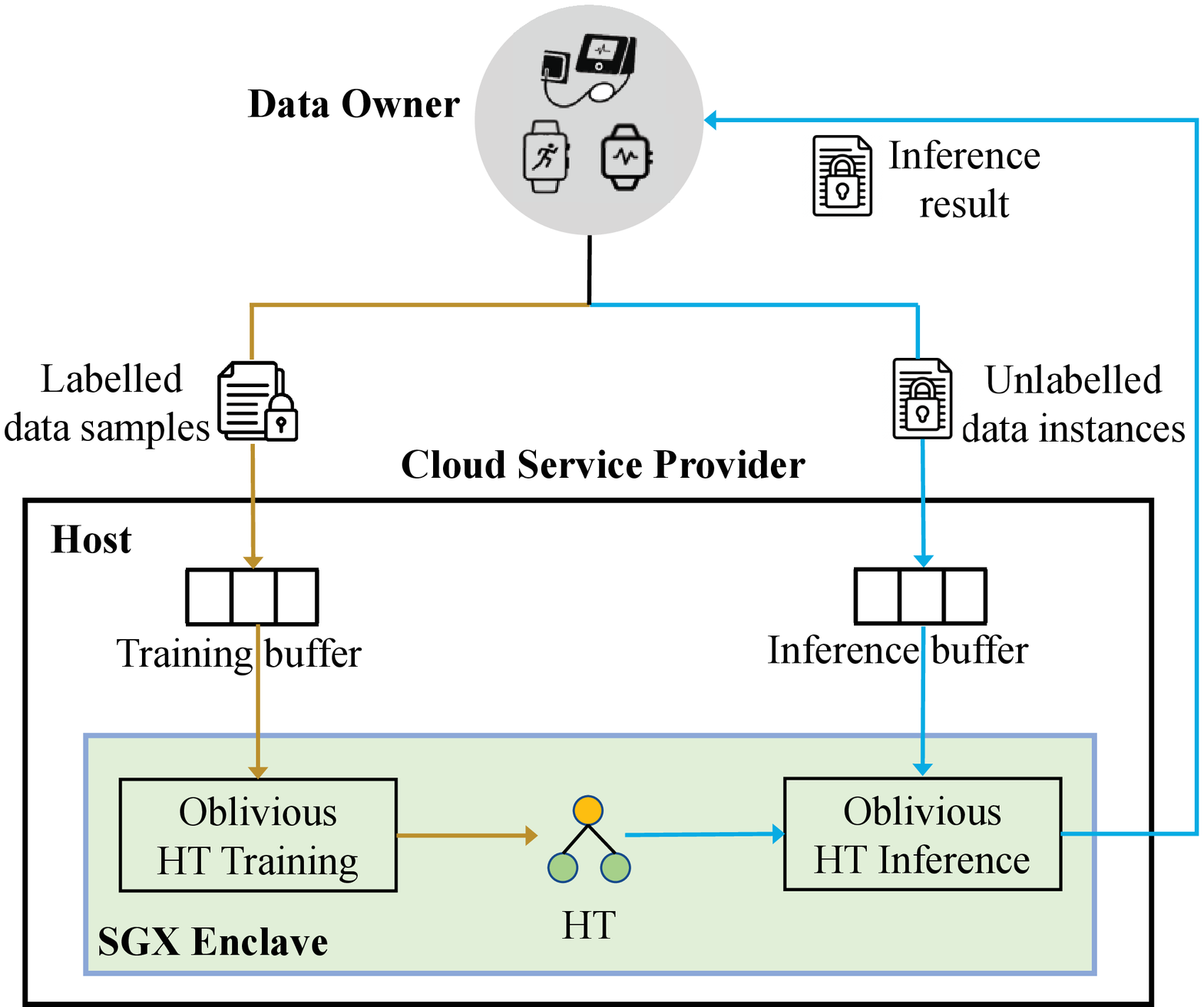}
\caption{The architecture of \sys.} 
\label{fig:arch}
\end{figure}

As shown in Fig.~\ref{fig:arch}, \sys consists of 2 entities: the Data Owner (DO) and the Cloud Service Provider (CSP). 

\begin{itemize}
    \item The DO continuously receives data from devices, encrypts them, and outsources them to the CSP. The data samples could be labelled samples or unlabelled instances. 
    In particular, labelled samples are used to train the model, while the unlabelled instances will be inferred with a label value by the model. The inference results are sent from the CSP to the DO.
    
    \item The CSP considered in \sys should have Intel SGX support, (\eg Microsoft Azure \cite{microsoft2021} and Alibaba Cloud \cite{ali2020white}). 
    The CSP consists of a trusted and an untrusted component. 
    The trusted component is represented by the \emph{SGX enclave} (as shown in Fig.~\ref{fig:arch}). This is where the models are trained and where the inference is performed. 
    The untrusted component is any computational resources in the \emph{CSP Host} that is outside the \emph{SGX Enclave}. 
    With the assistance of an enclave, the CSP trains the model with the data samples outsourced from the DO and classify them with the model.
    
\end{itemize}

\subsection{Design Overview}
\label{subsec: design overview}
The architecture of \sys is shown in Fig.~\ref{fig:arch}. 
It consists of two sub-components within the enclave: \textbf{Oblivious Training} and \textbf{Oblivious Inference}, and two buffers: \textbf{Training Buffer} and \textbf{Inference Buffer} outside the enclave. 
The two buffers outside the enclave receive encrypted labelled and unlabelled data from the DO for training and inference, respectively.  
Each sub-component reads data from the corresponding buffer periodically for subsequent processing. 
Oblivious Training outputs the HT model which will be the input of the Oblivious Inference. 
For unlabelled data instances, Oblivious Inference returns the predication results to the DO. 

To protect the access pattern for both training and inference, the two tasks are converted into matrix multiplications. 
Moreover, we use oblivious primitives, such as $\mathtt{ogreater}$, $\mathtt{oassign}$ and $\mathtt{oaccess}$, to process the remaining operations in order to hide the enclave memory access pattern.

\subsection{Threat Model}
\label{threat model}
We assume the \textit{DO} and the SGX enclave are fully trusted. 
The CSP host is untrusted and attempts to infer secrets, such as the tree structure and statistical information, by observing and analysing memory access pattern of the enclave. 
Moreover, the CSP can eavesdrop on the communication between the DO and the enclave. 
Note that rollback attacks \cite{parno2011memoir}, denial-of-service attacks \cite{gruss2018another} and other attacks based on physical information, such as electromagnetic, power consumption and acoustic are out of our scope.

The security analysis of \sys is given in Appendix~\ref{sec:proof}.

%



\section{Data and Model Representation}
\label{sec: data}
\begin{table}[!t]
	\centering
	\footnotesize
	\caption{Notations}
	\label{tab:notation}
	\begin{tabular}{c l}
		\toprule
		\textbf{Notation} & \textbf{Description} \\ \hline
		$D$ & A data sample \\	\hline
		$d$ & Number of features \\	\hline
		$S$ & A sequence of $d$ features, $S=(s_1, s_2, {\cdots}, s_d)$ \\	\hline
		$m_i$ & Number of values of feature $s_i$, where $i \in [1,d]$ \\	\hline
		$V_{s_i}$ & Values of $s_i$, $V_{s_i}=(v_{i,1}, v_{i,2}, {\cdots}, v_{i, m_i})$ \\	\hline
		$M$ & Length of the bit-representation of $D$ \\	\hline
		$\overline{G}(\cdot)$ & Heuristic measure, \ie Information Gain (IG) \\ \hline
		$P_{real}/P_{dummy}$ & Number of real/dummy paths in the tree \\	\hline
		$\mathcal{M}_d/\mathcal{M}_i$ & Labelled/Unlabelled data matrix \\	\hline
		$\mathcal{M}_t$ & Matrix representation of the model \\ \hline
		
		$\mathcal{M}_t[;p]$ & The $p$-th column of $\mathcal{M}_t$ \\ \hline 
		$u_p$ & Number of unassigned features for $\mathcal{M}_t[;p]$ \\ \hline
		$\tau_p$ & Number of assigned feature values for $\mathcal{M}_t[;p]$ \\ \hline
		$\mathcal{M}_q^p$ & Query matrix of $\mathcal{M}_t[;p]$ \\	\hline
		$L$ & Number of all possible $(value, label)$ \\	\hline
		$L'$ & {\makecell[l]{Number of all possible $(value, label)$ \\ of unassigned features in a leaf}} \\	\hline
		$c_{(value, label)}$ & Frequency of each $(value, label)$ \\	
		\bottomrule
	\end{tabular}
\end{table}
%


In this section, we provide some details on how a Hoeffding Tree (HT) is originally built and used for inference. Then, we will describe how the data and the HT are represented in our approach. To make things more concrete, we will use a simple running example throughout this paper. The example consists of building a HT to decide whether it is suitable to play tennis based on a weather dataset \cite{emekcci2007privacy}. The example tree consists of 4 features and each feature has 2 or 3 possible values listed as follows: $Outlook~(Sunny, Overcast, Rain)$, $Windy~(True, False)$, $Humidity~(High, Normal)$, and $Temperature$ $(Hot, Cool)$. Each internal node of the tree is assigned with a feature, and its possible values determine the branches of the node. The leaf nodes represent the label that has 2 values: either $Yes$ or $No$. Fig.~\ref{fig:tree example} shows how a HT is built for this example. 
In the rest of this paper we will use the notation shown in Table~\ref{tab:notation}.

\subsection{Hoeffding Tree}
\label{subsec:HT construction}
\begin{figure}[ht!]
  \centering
    \begin{subfigure}{0.48\columnwidth}
      \centering   
      \includegraphics[width=4.1cm,height=2.8cm]{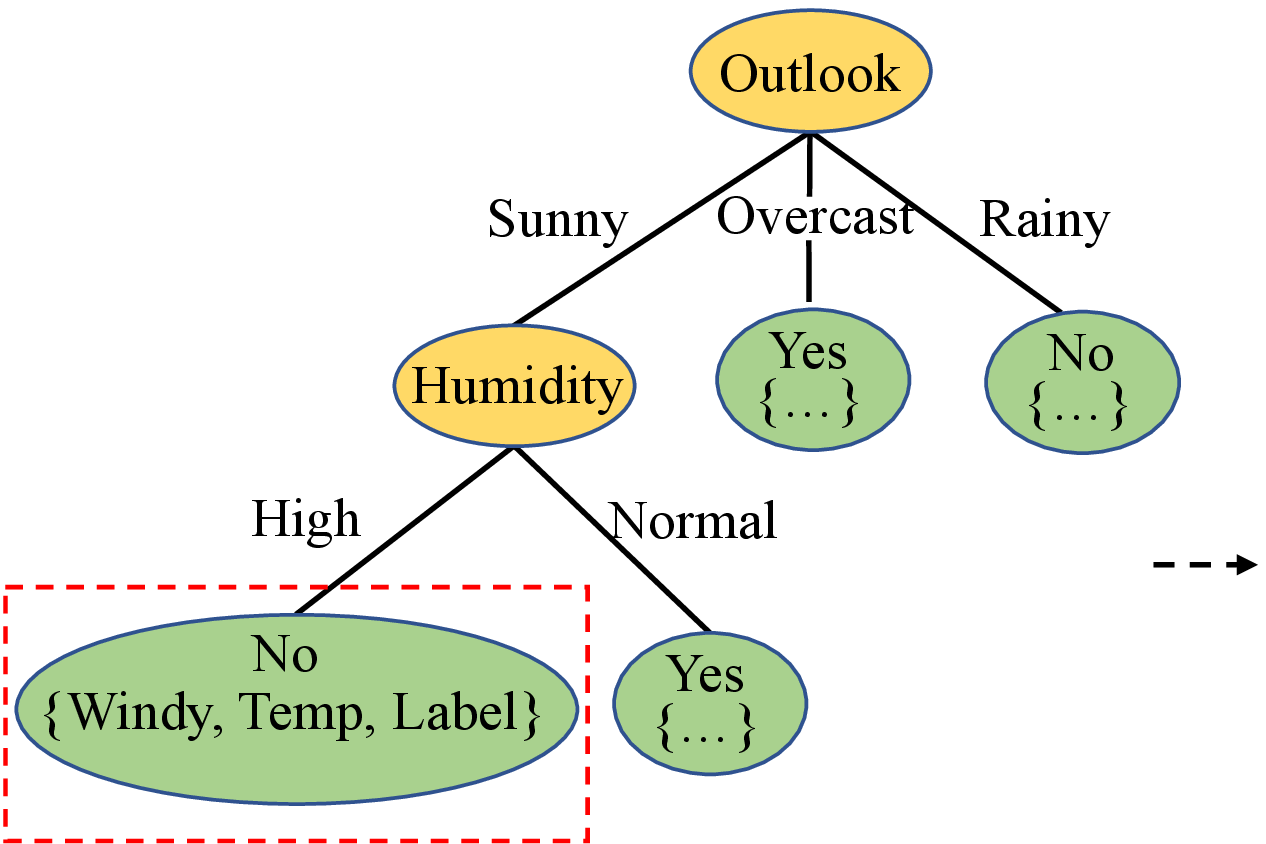}
        \caption{Before one round training}
        \label{fig:sub1}
    \end{subfigure}   
    \begin{subfigure}{0.48\columnwidth}
      \centering   
      \includegraphics[width=4.1cm,height=2.8cm]{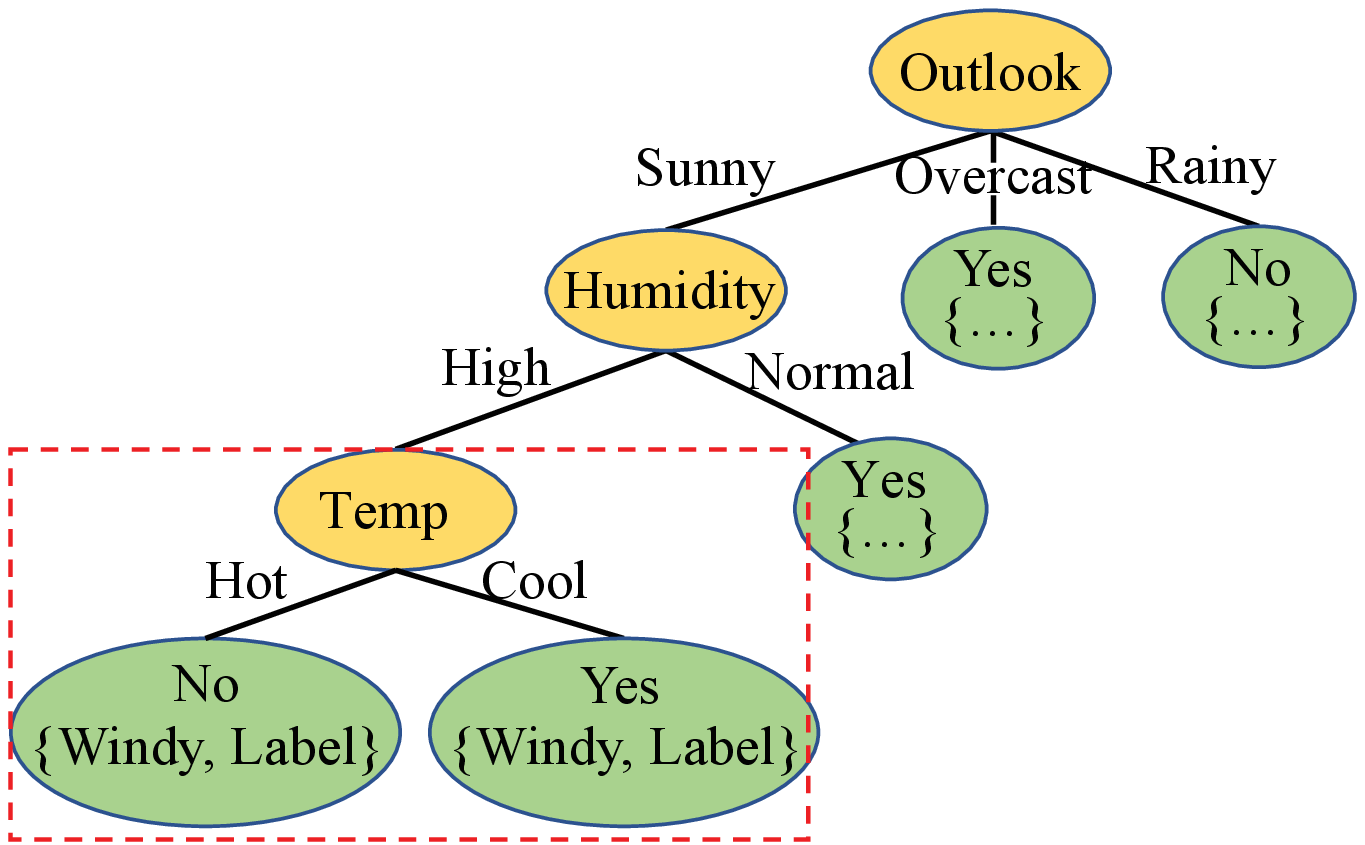}
        \caption{After one round training}
        \label{fig:sub2}
    \end{subfigure}
\caption{
A HT example and its extension after one round of training. The orange nodes are internal nodes which have been assigned with features. The green nodes are leaves. Each leaf has a label value for inference, and it stores statistical information for the features that have not been assigned to the path for training. Each branch is assigned with a feature value.}
\label{fig:tree example}
\end{figure}
A decision tree consists of internal nodes (including the root) and leaves, where each internal node is associated with a test on a feature, each branch represents the outcome of the test, and each leaf represents a class label which is the decision taken after testing all the features on the corresponding path.

Building a decision tree is a process of assigning features to its internal nodes. 
The distribution of features among the nodes determines the structure of the tree which affects the inference accuracy of the model.
Thus, the key operation of tree building is to find the \textit{best feature} for each leaf, so that we can achieve a high inference accuracy. 

The HT \cite{domingos2000mining} algorithm builds a decision tree incrementally in a top-down manner, by continually converting leaves into internal nodes with the data stream. 
Converting a leaf to an internal node requires assigning a feature to the node. 
HT training uses traditional Information Gain (IG) in ID3 \cite{quinlan1986induction} and \textit{Hoeffding Bound} \cite{domingos2000mining} to evaluate which feature is the best to be assigned to a leaf for data streams. 
Specifically, when data samples are classified into a leaf, we compute the IG of the features that have not been assigned to any internal node on the path to the leaf, and check if the difference between the top two IGs is greater than the Hoeffding Bound. 
If yes, the feature with the highest IG will be used to covert the leaf to an internal node. 
Otherwise, we just update the statistical information stored for the leaf. 
For instance, in the tree shown in Fig.~\ref{fig:sub1}, when there are samples classified into the left-most leaf, only the IGs of $Windy$ and $Temp$ will be computed as $Outlook$ and $Humidity$ have already been assigned to internal nodes on the left-most path. 
We use $\overline{G}(feature)$ to represent the feature's IG. 
Assume $\overline{G}(Temp) > \overline{G}(Windy)$, 
$Temp$ will be chosen as the best feature for the leaf when $\overline{G}(Temp)-\overline{G}(Windy)>{\epsilon}$, where $\epsilon$ is the Hoeffding Bound. 
The tree in Fig.~\ref{fig:sub2} shows how the leaf is converted into an internal node with feature $Temp$. The new internal node generates 2 branches and 2 new leaves as $Temp$ has 2 possible values: $Hot$ and $Cool$. The two new leaves just need to compute the IG of the last unassigned feature $Windy$ for upcoming data samples.

Formally, the Hoeffding Bound is defined as ${\epsilon}=\sqrt{\frac{\log^2 c ~*~ ln(1/{\delta})}{2n}}$ \cite{domingos2000mining}, where $n$ is the number of samples classified into the leaf, $c$ is the number of total label values, and $1-\delta$ represents the probability of choosing the correct feature for the leaf node. 
Both $c$ and $\delta$ are constant.

For HT training, each leaf of the current tree keeps receiving labelled data samples, and the samples might contain different values for each feature and label. 
The IG of a feature is derived from the frequencies of its possible $(value, label)$ pairs. 
For instance, for the left-most leaf of the tree in Fig.~\ref{fig:sub1}, to compute $\overline{G}(Temp)$ we need to count how many samples have been classified into the left-most leaf. These samples might contain the following pairs: \textit{(Hot, Yes), (Hot, No), (Cold, Yes), (Cold, No)}. 
Similarly, to compute $\overline{G}(Windy)$, we need to count how many samples have got the following pairs: \textit{(True, No), (True, Yes), (False, No)}, and \textit{(False, Yes)}. 
Each leaf records the frequencies of the pairs for unassigned features and updates them when receiving new samples.

Computing IG values is expensive due to the complex logarithm and exponentiation operations. Therefore, feature' IGs of each leaf are computed when the leaf receives every $n_{min}$ samples, where $n_{min}$ is a pre-defined parameter.

Overall, we can summarize the main operations of building a HT model with the following steps: 
\begin{enumerate}
    \item classifying new arrivals into leaves with current HT model; and performing steps 2 and 3 for each leaf that gets new data samples; 
    
    \item updating the frequency of each $(value, label)$ pair for unassigned features; 
    
    \item checking if the leaf has received $n_{min}$ data samples, and performing steps 4-6 if true; 
    
    \item computing the IG value for each unassigned feature; 

    \item checking if the top two highest IG values satisfy the Hoeffding Bound; 
   
    \item if true, converting the leaf node into an internal one using the feature with the highest IG value.
\end{enumerate}

 

Inference operations start from the root of the tree. 
An unlabelled instance is tested with the feature at each internal node and then moved down the tree along the edge corresponding to the instance's value for that feature. 
When a leaf node is reached on the path, the label associated with it is assigned to the instance. 

\subsection{Data Representation}
\label{subsec:sample}
\begin{figure}[ht!]
  \centering
    \begin{subfigure}{1\columnwidth}
      \centering   
      \includegraphics[width=\linewidth]{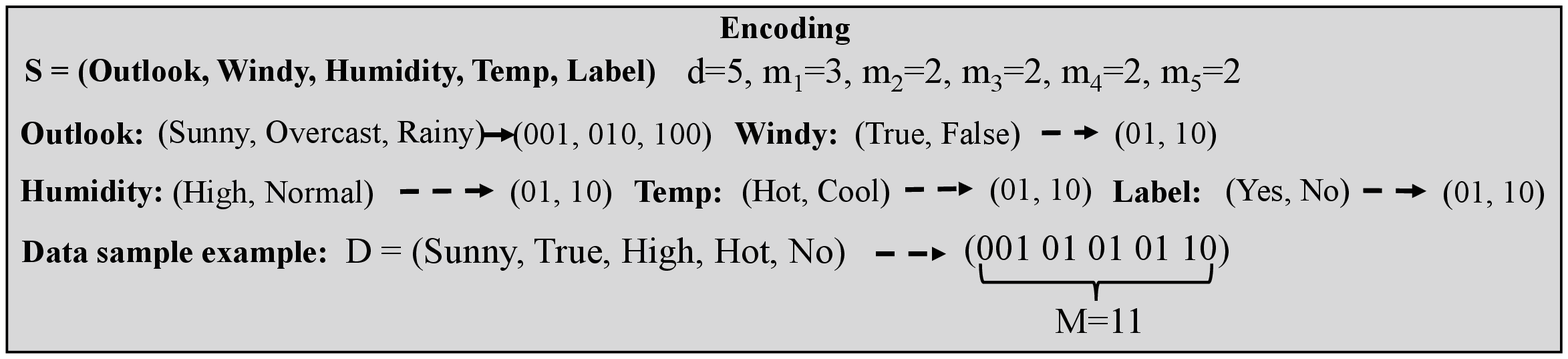}
        \caption{}
        \label{fig:treematrix_sub1}
    \end{subfigure}   
    \begin{subfigure}{1\columnwidth}
      \centering   
      \includegraphics[width=\linewidth]{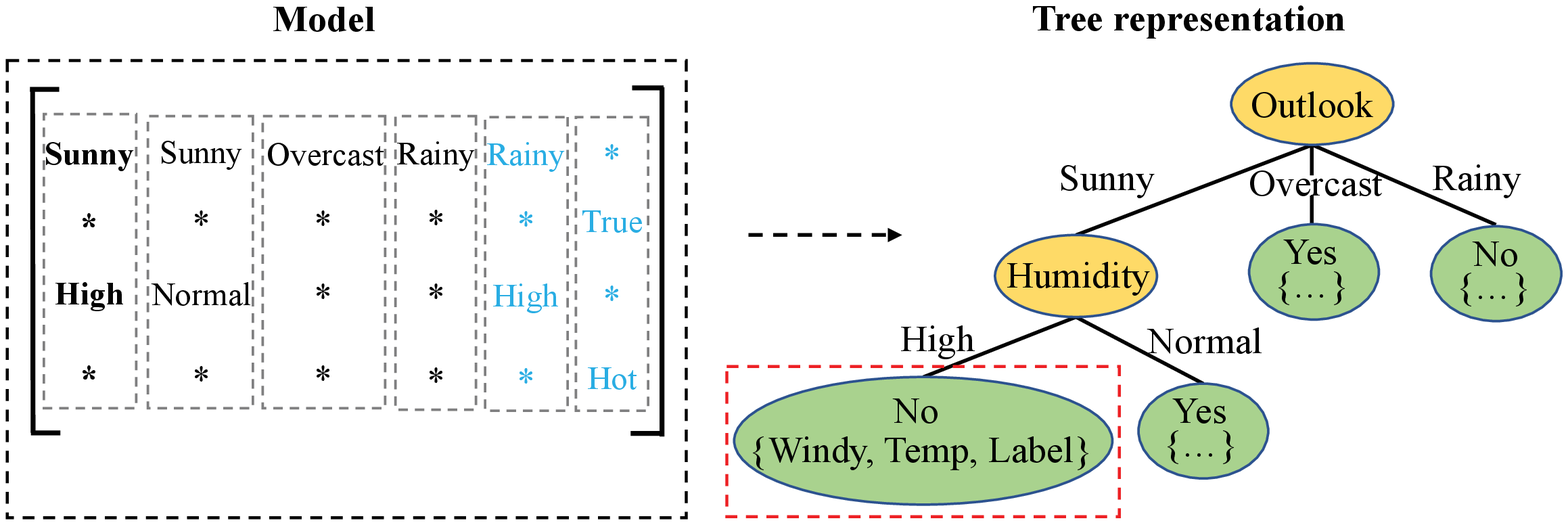}
        \caption{}
        \label{fig:treematrix_sub2}
    \end{subfigure}
    \begin{subfigure}{1\columnwidth}
      \includegraphics[width=0.65\linewidth, left]{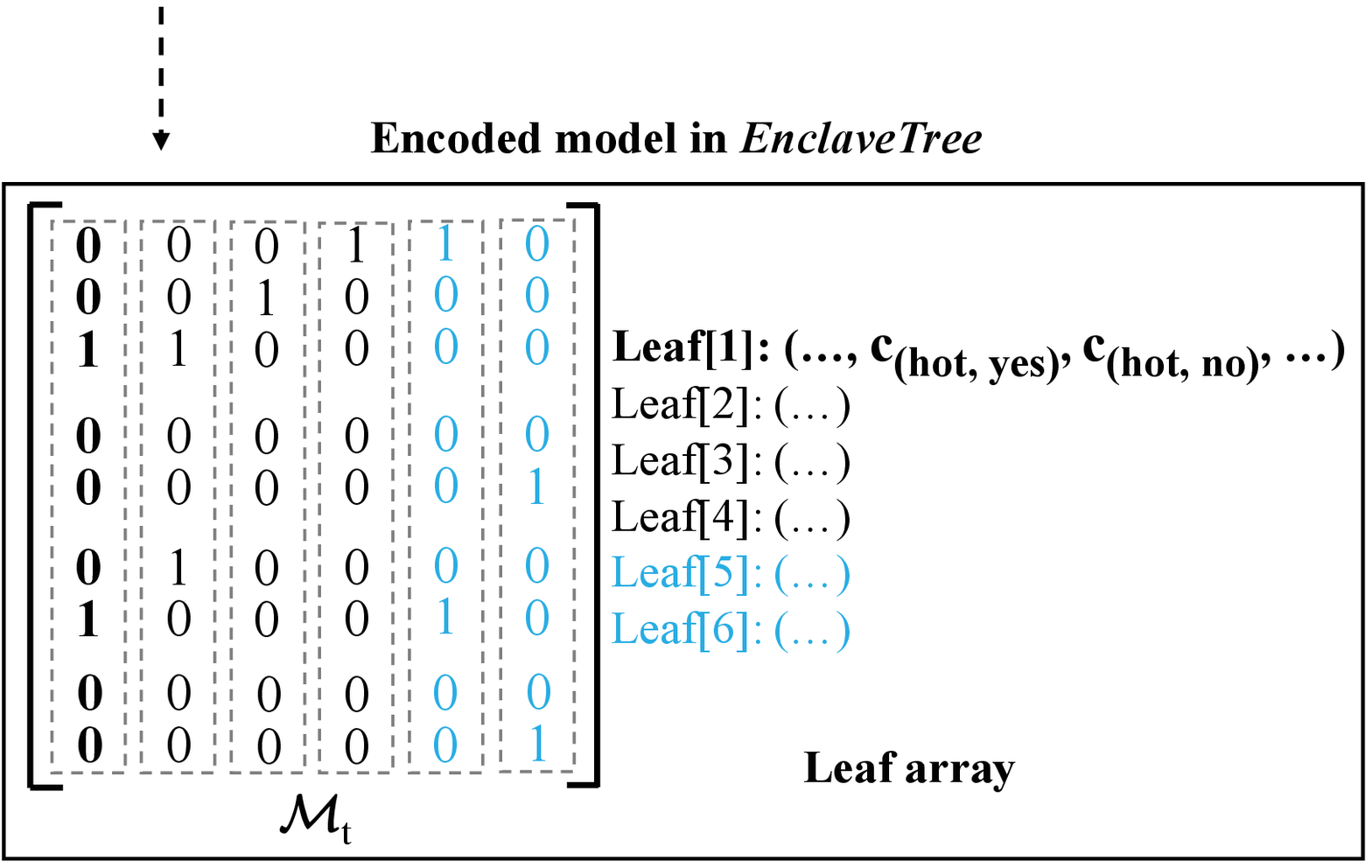}
        \caption{}
        \label{fig:treematrix_sub3}
    \end{subfigure}
\caption{The HT representation in \sys. The columns and the arrays in blue are dummy ones.}
\label{fig:tree matrix}
\end{figure}
\sloppy We assume that $S=(s_1, s_2, {\cdots}, s_d)$ represents a sequence of $d$ features, with each feature $s_i$ having $m_i$ possible values: $V_{s_i}=(v_{i,1}, v_{i,2}, {\cdots}, v_{i, m_i})$, where $1 \leq i \leq d$. 
We use the one-hot encoding technique \cite{harris2010digital} to encode each value $v_{i, j}$ into a bit string, where $1 \leq j \leq m_i$. 
More precisely, a $m_i$-bit string is used to represent a value $v_{i, j}$, where the $j$-th bit of the string is $1$ and all the other bits are $0$. 
For labelled data samples, the last feature $s_d$ is the label, and we will use the same bit representation for possible label values. 
Therefore, a data sample $D$ is represented as a bit string with $M=\sum_{i=1}^{i=d} m_i$ bits. 

Fig.~\ref{fig:treematrix_sub1} shows a concrete example for the encoding of 5 features. In the example, $d=5$ and $S=(Outlook$, $Windy$, $Humidity$, $Temp$, $Label)$. 
The first feature $s_1=Outlook$ has 3 values (\ie $m_1=3$): $v_{1, 1}=Sunny$, $v_{1, 2}=Overcast$ and $v_{1,3}=Rain$, and they will be encoded to: $001$, $010$, and $100$, respectively. 
The last feature $s_5=Label$ has 2 values (\ie $m_5=2$): $v_{5,1}=Yes$ and $v_{5,2}=No$, and they will be encoded to $01$ and $10$, respectively. 
A data sample $D=(Sunny$, $True$, $High$, $Hot$, $No)$ will be encoded into $(001 01 01 01 10)$, consisting of 11 bits (\ie $M=11$).

Based on the bit-wise representation, we can query if a data sample contains $x$ given feature values by calculating the inner product between its encoding and a $M$-bit mask. 
Specifically, for each value $v_{i, j}$ to be queried, we set its corresponding bits in the mask to its encoding and set all the other bits to $0$. 
In this way, the inner product should be equal to $x$ if the sample contains all the $x$ values. 
In our example, if we want to check that a sample $D$ contains $(Sunny, Yes)$, the mask will be set to $(001 00 00 00 01)$, the inner product will be equal to 2 if both values are contained in $D$. 

Here, we stress that our work focuses on training categorical features. 
Numerical features can be converted into categorical ones using methods such as discretization \cite{dougherty1995supervised}. 
Specifically, numerical values of a feature can be grouped into discrete bins.
For example, if we wanted to group the values for \textit{Temp} 2 categories this could be a possible discretization: \textit{Cool} for temperatures below $25^{\circ}C$, \textit{Hot} for temperatures equal or above $25^{\circ}C$. 

\subsection{Model Representation}
\label{subsec:tree rep}
One of the main contributions of \sys is the novel way in which we represent the model as a matrix, and perform the HT training and inference as a matrix multiplication to hide the access pattern. Fig.~\ref{fig:treematrix_sub2} shows a simplified matrix representation of the model with the value expressed as strings of characters and its corresponding tree representation. Columns in the matrix map to paths of the tree. Each column contains $d-1$ elements where the $i$-th element is the value of feature $s_i$ assigned to the corresponding path\footnote{Note that the order of features in each column is fixed and same to the order defined in $S$, \ie the $i$-th value of each column must be a value of feature $s_i$. }.
In particular, if a feature $s_i$ has not been assigned to the specific path, the $i$-th element of the column is set to `$*$'. This  will be converted into specific feature values with the subsequent training. 

The last two columns in the matrix are dummy columns. In order to hide the number of tree paths from side-channel attacks, \ie the number of columns in the matrix, we add a number of dummy columns into the matrix. 
The elements in dummy columns can be of any value. More details on how dummy columns are generated will be provided in Section~\ref{sub:HT_train}.

\sloppy To make things more concrete, let's look at the example in Fig.~\ref{fig:treematrix_sub2}. The matrix representing our model consists of 4 real columns and 2 dummy columns. 
The first column contains the elements $(Sunny, *, High, *)$.
This indicates that the value $Sunny$ for feature  $s_1$ (\ie $Outlook$) and value $High$ for feature $s_3$ (\ie $Humidity$) are assigned to the first path of the tree.  
Likewise, the third column $(Overcast, *, *, *)$ indicates that only the value $Overcast$ has been assigned to the third path for feature $s_1$, while the remaining 3 features have not been assigned. 
The right-hand side of Fig.~\ref{fig:treematrix_sub2} depicts the model currently stored in the matrix if it were represented as a tree.

As we said, the matrix in Fig.~\ref{fig:treematrix_sub2} is a simplified representation of how the model is stored in 
\sys.
Fig.~\ref{fig:treematrix_sub3} shows how the matrix is actually stored in the enclave as a collection of bit strings. Using the one-hot encoding technique, the matrix $\mathcal{M}_t$  only contains 0 and 1 bit. 
For instance, looking at the first column in the matrix, the values $Sunny$ and $High$ are encoded into $001$ and $01$, respectively; while the value `$*$' for feature is encoded into a string with 0 bits.


Each column of matrix only contains $d-1$ values:  these are the values that could be assigned to features  excluding the values for the labels. Thus each column of $\mathcal{M}_t$ has $M-m_d$ bits, where $m_d$ is the number of values for labels.
Assuming the model has $P_{real}$ real columns and \sys inserts $P_{dummy}$ dummy columns into $\mathcal{M}_t$, the total number of columns in the matrix is   $P=P_{real}+P_{dummy}$. 
Therefore, the size of $\mathcal{M}_t$ is $(M-m_d) \times P$.

For HT training, \sys also stores the statistical information for each leaf, which is required for computing the IG value. 
In \sys, the statistical information of each leaf is stored in a 2d array $Leaf$. 
Because the number of leaves of the model should also be protected, we store in $Leaf$ some dummy values representing dummy leaves. 
Considering that each column in the model could represent a HT path with a  leaf, then $Leaf$ contains $P$ 1d arrays: $P_{real}$ arrays for real leaves and $P_{dummy}$ arrays for dummy leaves.
Precisely, the $p$-th array, $Leaf[p]$, contains all features for the $p$-th leaf, where $p \in [1, P]$. The actual values stored in $Leaf$ are the frequency values defined as $c_{(value, label)}$, for each $(value, label)$ pair. 
Note that only the $(value, label)$ pair of the features that have not been assigned to a path will be updated and used for computing IG. 
Storing the pairs of all features for all leaves ensures $|Leaf[p]|$ is the same for all leaves, which is $L=\sum_{i=1}^{i=d-1} m_i*m_d$. 
In this way, the entire model structure is protected from side-channel attacks. 

Both $\mathcal{M}_t$ and $Leaf$ are stored within the enclave in plaintext. 
\section{HT Training and Inference in \sys}
\label{sec:details}
In this section, we explain how \sys obliviously trains the HT model and securely inferences unlabelled data instances. 

Although the main focus of this section is about training and inference for a single HT model, \sys can be easily extended to support a Random Forest (RF) model by performing the HT training and inference over several trees. We give the details of this extension for RF in Appendix~\ref{subsec:RF}.

\subsection{Setup}
\label{sub:setup}
As the first step in the setup, the DO establishes a secure channel with an enclave instance in the CSP to share a secret key $sk$. 
For HT training and inference, all the data transmitted between the DO and the enclave will be encrypted with $sk$ and a semantically secure symmetric encryption primitive, \eg AES-GCM. 
During the setup, DO also securely shares the features $S$ and the values $V_{s_i}$ of each feature to the enclave.

\subsection{Oblivious HT Training}
\label{sub:HT_train}

\begin{figure*}[!t]
\centering
\includegraphics[width=0.9\linewidth]{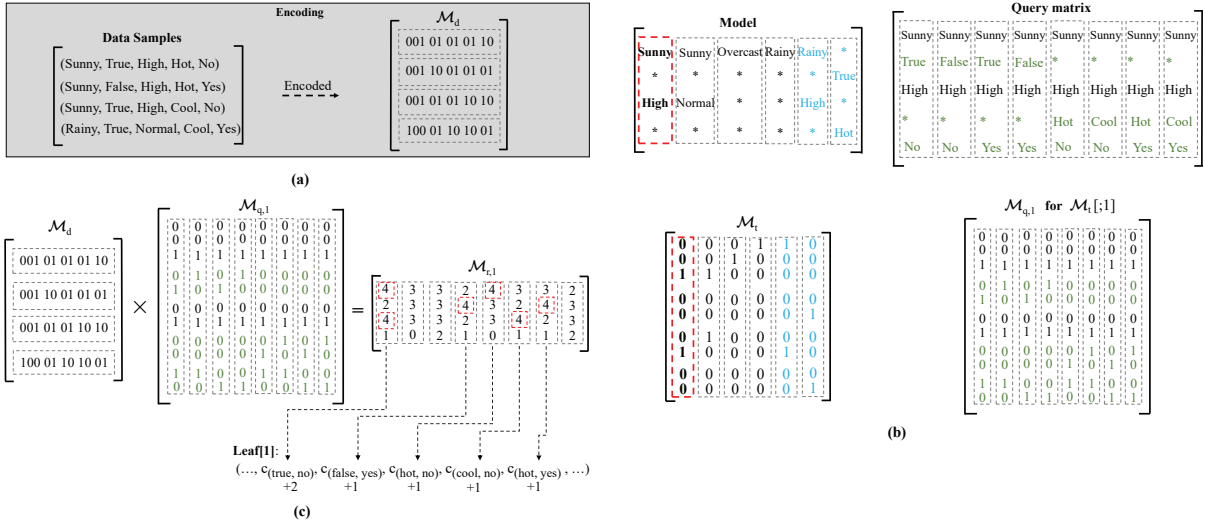}
\caption{HT training with matrix multiplication. $\mathcal{M}_d$ is the matrix of the data samples to be trained in encoded. $\mathcal{M}_q$ is the query matrix of the selected path in encoded. $\mathcal{M}_r=\mathcal{M}_d \times \mathcal{M}_q$ shows the query result. $\mathcal{M}_r[n, k]=4$ means the $n$-th data samples contains the 4 values queried by the $k$-th mask in $\mathcal{M}_q$, where $1\leq n \leq 4$ and $1\leq k \leq 8$. }
\label{fig:HT training example}
\end{figure*}

In Section~\ref{subsec:HT construction}, we have summarised the 6 steps for performing the HT training. In order to hide the tree structure during the training, the 6 steps are  modified in \sys as below:   
\begin{enumerate}
    \item classifying new arrivals into leaves with current HT model;
    
    \item updating the frequency of each $(value, label)$ pair of each feature for all leaves, not only for the leaves that receive new data samples; 
    
    \item checking if each leaf has received $n_{min}$ data samples, and performing steps 4-6 if true; 
    
    \item computing the IG value for all features, not just for unassigned features; 

    \item checking if the top two highest IG values satisfy the Hoeffding Bound; 
   
    \item if true, converting the leaf node into an internal node using the feature with the highest IG value; Otherwise, performing indistinguishable dummy operations.
\end{enumerate}

Here we present how each step is performed obliviously in \sys in details. We will use as an example the case illustrated in Fig.~\ref{fig:HT training example}.

To protect the access pattern from side-channel attacks, \sys performs the first two steps with a matrix multiplication. 
Basically, \sys converts a batch of data samples to a matrix $\mathcal{M}_d$, generates a query matrix $\mathcal{M}_q^p$ for each column $p$ in matrix $\mathcal{M}_t$, and computes $\mathcal{M}_r^p \leftarrow \mathcal{M}_d \times \mathcal{M}_q^p$. 
The elements of the resulting matrix $\mathcal{M}_r^p$ will be used to update the frequency information in the $Leaf[p]$ array. 

In the following, we take the first column of $\mathcal{M}_t$, denoted as $\mathcal{M}_t[;1]$, as an example. The different steps are shown in Fig.~\ref{fig:HT training example}. 

\noindent\textbf{Data Samples Matrix.}
To improve efficiency, we perform the 6 steps of HT training when a batch of $N$ data samples has been stored in the  Training Buffer on the CSP. The Training Buffer stores the data samples outside the enclave. Note that the buffer size could be larger than $N$. 
When $N$ data samples are cached in the Training Buffer, \sys loads these samples into the enclave, and for each round of training, converts them into a matrix $\mathcal{M}_d$. 
Recall that \sys represents the data sample as an $M$-bit string. 
After the $N$ data samples are imported in the enclave and decrypted, \sys packs them into a $N{\times}M$ matrix $\mathcal{M}_d$, where each row of $\mathcal{M}_d$ is a data sample encoded as a bit string.
Fig.~\ref{fig:HT training example}a shows an example where $N=4$, and each data sample is represented as a $11$-bit string. 
Thus, the resulting size of the matrix \textbf{$\mathcal{M}_d$} is $4{\times}11$.

\noindent\textbf{Query Matrix.}
Assume column $\mathcal{M}_t[;p]$ contains $\tau_p$ assigned feature values and $u_p$ unassigned features.
Our next step is to query whether any data sample in the current batch contains (i) the $\tau_p$ feature values assigned in the column $\mathcal{M}_t[;p]$, and (ii) a $(value, label)$ pair for any of the $u_p$ features that are not still assigned.

We perform this query by means of a matrix multiplication and the result of this multiplication will be another matrix $\mathcal{M}_r^p$. 
The elements in $\mathcal{M}_r^p$ are then used to update the frequencies of the queried $(value, label)$ pairs in the array $Leaf[p]$.

The process of generating a query matrix $\mathcal{M}_q^p$ for a given column $\mathcal{M}_t[;p]$ is then reduced to define a set of $M$-bit masks which form the columns in $\mathcal{M}_q^p$. 
Each mask can only check one case. 
Thus the number of masks, \ie the number of columns of matrix $\mathcal{M}_q^p$, is determined by the possible values of unassigned features and the possible values of the label. 
In more detail, for $\mathcal{M}_t[;p]$, $\mathcal{M}_q^p$ are determined by (i) the $\tau_p$ assigned feature values (these will be the same across all the column of the query matrix); and (ii) all the possible combinations of the $(value, label)$ pairs for the $u_p$ unassigned features. 

To make things more concrete, let us look at Fig.~\ref{fig:HT training example}b, where both the model matrix $\mathcal{M}_t$ and the query matrix $\mathcal{M}_q^1$ for column $\mathcal{M}_t[;1]$ are presented in human-readable and bit-string forms. 
As we can see from the figure, $\mathcal{M}_t[;1]$ includes 2 assigned feature values (\ie $Sunny$ and $High$), and 2 unassigned features (\ie $Windy$ and $Temp$). This means that $\tau_1=2$ and $r_1=2$.

The number of columns (\ie masks) in $\mathcal{M}_q^p$ is defined as $L'=\sum_{i=1}^{i=u_p}m_i*m_d$ where $m_i$ are the possible values of each unassigned feature and $m_d$ are the possible values of the $Label$. This means that the size of $\mathcal{M}_q$ is $M\times L'$. 

In the example in Fig.~\ref{fig:HT training example}b, as both the unassigned features, $Windy$ and $Temp$, and the label $Label$ have 2 possible values (\ie  $V_{Windy}=(True, False)$, $V_{Temp}=(Hot, Cool)$, and $V_{Label}=(Yes, No)$), the total number of masks that we need to query is given by the following:   $|V_{Windy}|*|V_{label}|+|V_{Temp}|*|V_{label}|=8$. In other words,  for column $\mathcal{M}_t[;1]$ we need a query matrix $\mathcal{M}_q^1$ of 8 columns with the values for each column shown in  Fig.~\ref{fig:HT training example}b. 

    

\noindent\textbf{Matrix multiplication.}
By computing $\mathcal{M}_d \times \mathcal{M}_q^p$, we get a $N \times L'$ result matrix $\mathcal\mathcal{M}_r^p$. 
We use $\mathcal\mathcal{M}_r^p[n, k]$ to represent its element at the $n$-th row and $k$-th column, where $n \in [1, N]$ and $k \in [1, L']$. 
$\mathcal\mathcal{M}_r^p[n, k]$ is the inner product between the $n$-th data sample and the $k$-th mask.
This value represents the number of values in $n$-th data sample that match the values in the $k$-th column of the query matrix. 
We are interested in finding the data samples that fully match the values defined in $\mathcal{M}_q^p[;k]$: the $\tau_p$ assigned feature values in $\mathcal{M}_t[;p]$ and the $(value, label)$ pair that we are querying for.  
In other words, if $\mathcal{M}_r[n, k] =\tau_p+2$ the $n$-th sample matches the mask $\mathcal{M}_q^p[;k]$. 
To be more concrete, let us look at a specific case presented in Fig.~\ref{fig:HT training example}c. 
Recall that we are querying for $\mathcal{M}_t[;1]$: this column has two fixed values $Sunny$ and $High$. Thus we are looking for a matching value of $\tau_1+2=4$.  
In Fig.~\ref{fig:HT training example}c, we can see all the elements $\mathcal{M}_r^1[n, k]=4$ highlighted in red boxes.

The next step is to update the frequency information of each $(value, label)$ pair contained in the $Leaf$ arrays. 
This is performed by scanning each column of the result matrix $\mathcal\mathcal{M}_r^p$ and checking how many elements in each column is equal to $\tau_p+2$. 
For instance, in Fig.~\ref{fig:HT training example}c, the first column of $\mathcal{M}_r^1$ contains two matches. The corresponding frequency value $c_{(True, No)}$ in $Leaf[1]$ is increased by 2. 
Here the enclave uses a mapping $\sigma$ to map the columns of $\mathcal{M}_r^q$ to the elements in $Leaf[p]$.

\sys executes these operations obliviously, otherwise an adversary could use side-channel attacks to learn which data sample contains which pair.  
Precisely, \sys linearly scans each column of $\mathcal\mathcal{M}_r^p$, using $\mathtt{oequal}$ to check how many elements in $\mathcal\mathcal{M}_r^p[;k]$ equal to $\tau_p+2$.
At the last step, the frequency counts are added to the corresponding $c_{(value, label)}$ value in the relevant leaf array using $\mathtt{oassign}$. 


During the training, \sys requires to access all the columns of $\mathcal{M}_t$ and generate a query matrix for each column. 
Even if this operation is executed in the enclave, with side-channel attacks, an adversary could infer information about the model (\eg the number of columns, which maps to the number of HT paths). 
Likewise, when accesses are made to $Leaf$ for updating the frequency information, the adversary could also infer the number of leaves. 
To prevent such a leakage, \sys inserts dummy columns and dummy arrays into $M_{t}$ and $Leaf$ during the setup. \sys uses a $P$-bit string $isDummy$ to mark if $\mathcal{M}_t[;p]$ and $Leaf[p]$ is real or dummy. 

\begin{figure}[!ht]
  \centering
      \includegraphics[width=\linewidth]{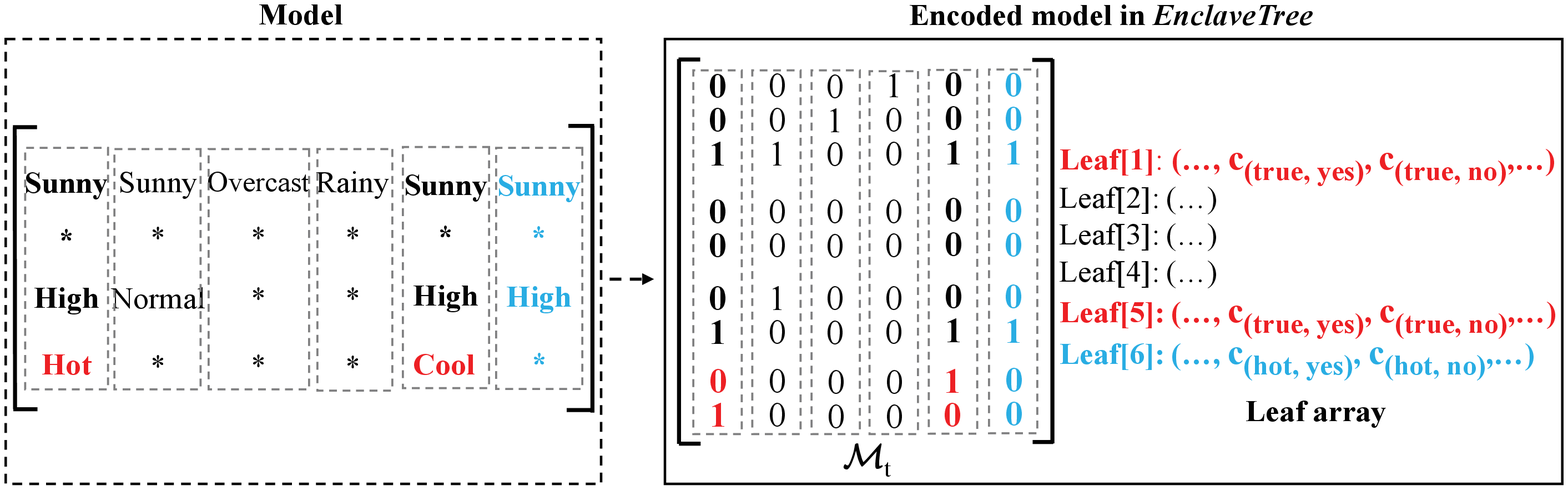}
        \caption{The model after one round training. The parts set in red are those modified after one round of training.}
        \label{fig:matrix after training}
\end{figure}

\noindent\textbf{Oblivious model construction.}
Once the frequency of each pair has been updated, the IGs of those leaves that have received $n_{min}$ data samples can be securely computed within the enclave.
However, the last 2 steps should be performed obliviously as they involve memory access. 

For step 5, the enclave uses $\mathtt{ogreater}$ and $\mathtt{oequal}$ to obliviously find out the two features with the highest IG values for each leaf. 
Assume the two features are $s_a$ and $s_b$ for $Leaf[p]$, where $\overline{G}(s_a) > \overline{G}(s_b)$. The enclave uses $\mathtt{ogreater}$ to check if $\overline{G}(s_a)-\overline{G}(s_b)>{\epsilon}$. 
If true, the enclave selects the feature $s_a$ using $\mathtt{oselect}$ and performs the last step, \ie converting $Leaf[p]$ into an internal node with $s_a$.

In terms of the tree structure, converting a leaf into an internal node means assigning $s_a$ to the leaf, outputting $m_a$ branches with $m_a$ new leaves, and assigning the $m_a$ values of feature $s_a$ to the new branches. 
In terms of the matrix model in \sys, the enclave modifies $\mathcal{M}_t$ and $Leaf$ with the following extensions.

\textbf{$\mathcal{M}_t$ extension:} 
    To hide whether the model is extended after each round of training, \sys converts $m_a-1$ dummy columns into real ones by resetting $isDummy$, rather than adding new columns into $\mathcal{M}_t$. In more details, \sys first copies the values of $\mathcal{M}_t[;p]$ to $m_a-1$ dummy columns, and then assigns the $m_a$ values of feature $s_a$ to $\mathcal{M}_t[;p]$ and the $m_a-1$ dummy columns with $\mathtt{oassign}$. 
    Fig.~\ref{fig:matrix after training} shows how $\mathcal{M}_t$ is changed when $Leaf[p]$ is converted into an internal node with feature $Temp$. In the example, $m_{Temp}=2$, thus only one dummy column, $\mathcal{M}_t[;5]$, is converted into a real one. The last 2 bits of $\mathcal{M}_t[;1]$ and $\mathcal{M}_t[;5]$are changed to 01 and 10,  respectively (the encoding for $Hot$ and $Cool$, respectively).   
    
\textbf{$Leaf$ extension:} 
    As $m_a$ new leaves are added, the leaf array $Leaf$ should also be updated. 
    Similarly, \sys first converts $m_a-1$ dummy arrays into real ones by initializing all the possible $c_{(value, label)}$ of unassigned features to $0$. The original leaf $Leaf[p]$ will be used to store the statistical information of the new $p$-th leaf, and its each $c_{(value, label)}$ is set to $0$.

During the setup, the enclave generates a number of dummy columns and leaves in $\mathcal{M}_t$ and $Leaf$, respectively. 
As dummy values in both the model and the $Leaf$ arrays are processed as real values, a large number of dummies will degrade the performance. 
To balance efficiency with security, \sys periodically generates new dummies. 
In detail, after $\gamma$ extensions, \sys checks the number of remaining dummy values, and if this value is below a given threshold $T$, \sys generates new dummies. 
The threshold $T$ should ensure there are enough dummies for $\gamma$ extensions. 
In the worst case, all of the $\gamma$ leaves are split and generate $\gamma*(m_{max}-1)$ new leaves, where $m_{max}=\max\{m_1, ..., m_{d-1}\}$. 
We thus set $T=\gamma * (m_{max}-1)$.  

\subsection{Oblivious HT Inference}
\label{sub:HT_inference}
\begin{figure}[!t]
\centering
\includegraphics[width=\columnwidth]{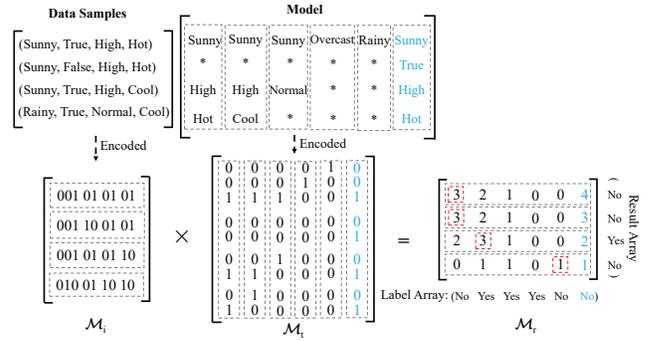}
\caption{HT inference with matrix multiplication.}
\label{fig: HT inference example}
\end{figure}
One of the features of data stream classifications is that unlabelled data instances can be received for inference at any time. In other words, there is not a clear separation between a training and an inference phase. As such, \sys has to be able to support  inference operations while the model is being trained. 

The target of HT inference is to return a classifying label value for each data instance to the DO. 
Before classifying any data instance, \sys has to define the label values in the current model. 
Data samples with different label values could be classified into the same leaf during the training. The label value with the highest frequency will be used as the label value of the leaf. 
For the $p$-th leaf, the label value that has the highest frequency can be obtained by checking the $c_{(value, label)}$ in $Leaf[p]$ with oblivious primitives. 

To protect the enclave access pattern, \sys also performs the HT inference with a matrix multiplication. 
In more detail, the Oblivious Inference sub-component of \sys processes a batch of instances each time. 
Assume the batch size for HT inference is $N'$. 
After loading and decrypting $N'$ data instances, \sys converts the instances into a matrix $\mathcal{M}_i$. 
\sys also represents each data instance with a bag of bits. Compared with data samples, the bit string of a data instance only has $M-m_d$ bits as the data instance does not have label values.
Thus, the size of $\mathcal{M}_i$ is $N' \times (M-m_d)$. 
For instance, in Fig~\ref{fig: HT inference example}, each column of $\mathcal{M}_i$ has 9 bits.

\sys performs the inference by computing $\mathcal{M'}_r \leftarrow \mathcal{M}_i \times \mathcal{M}_t$. 
Since the size of $\mathcal{M}_i$ and $\mathcal{M}_t$ are $N' \times (M-m_d)$ and $(M-m_d) \times P$ respectively, the size of $\mathcal{M'}_r$ is $N' \times P$. 
The element $\mathcal{M'}_r[n, p]$ indicates whether the $n$-th data instance belongs to the $p$-th path, where $n \in[1, N']$. 
If this is the case, then $\mathcal{M'}_r[n, p]=\tau_p$. 
$\tau_p$ can be easily obtained by checking how many $1$ bits\footnote{The one-hot encoding ensure that the encoded value for each feature has only one bit set 1.} are in the $p$-th column of $\mathcal{M}_t$.

To check which path the $n$-th data instance belongs to, the enclave scans the $n$-th row of $\mathcal{M'}_r$ and checks if $\mathcal{M}_t[n, p]=\tau_p$ with $\mathtt{oequal}$. 
If this is true, then the label value of the $p$-th leaf will be the inference result for the $n$-th data instance. 
Finally, the enclave encrypts the $N'$ labels with $sk$ and sends them to the DO.

\begin{figure*}[ht!]
  \centering
    \begin{subfigure}{0.49\textwidth}
      \centering   
      \includegraphics[width=0.85\linewidth]{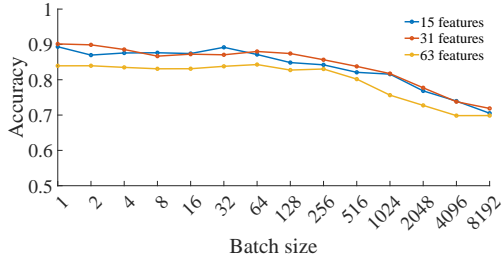}
        \caption{HT accuracy comparison across different batch size}
        \label{fig9:sub1}
    \end{subfigure}   
    \begin{subfigure}{0.49\textwidth}
      \centering   
      \includegraphics[width=0.85\linewidth]{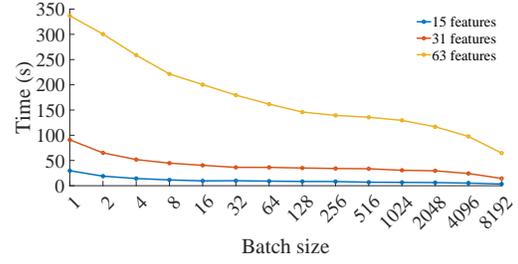}
        \caption{HT runtime comparison across different batch size}
        \label{fig9:sub2}
    \end{subfigure}
\caption{
\label{fig:HT_acc_batchRuntime}
The performance of HT training with different batch size
}
\end{figure*}

\begin{figure*}[!ht]
  \centering
    \begin{subfigure}{0.49\textwidth}
      \centering   
      \includegraphics[width=0.85\linewidth]{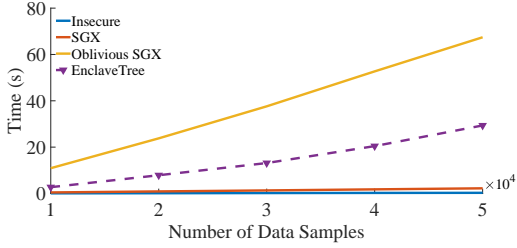}
        \caption{HT training across different number of samples}
        \label{fig7:sub1}
    \end{subfigure}   
    \begin{subfigure}{0.49\textwidth}
      \centering   
      \includegraphics[width=0.85\linewidth]{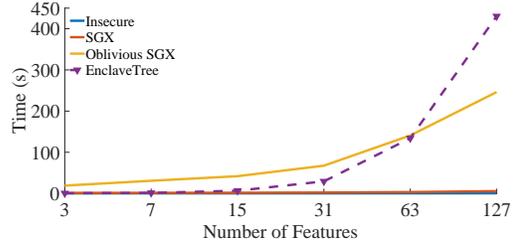}
        \caption{HT training across different number of features}
        \label{fig7:sub2}
    \end{subfigure}
\caption{
\label{fig:HT_training_perf}
The performance of HT training under different settings
}
\end{figure*}


\begin{figure}[!ht]
  \centering
      \includegraphics[width=0.85\linewidth]{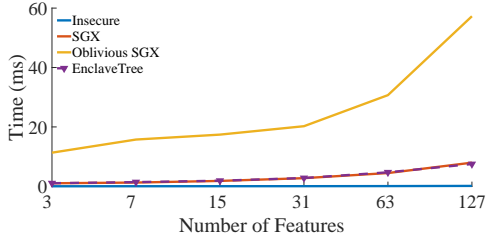}
\caption{HT inference across different number of features}
\label{fig8:sub2}
\end{figure}

\section{Implementation and Evaluation Results}
\label{sec: implementation}
In this section, we first describe the implementation of \sys. We then describe the evaluation test-bed we used for running our experiments. 
Finally, we conclude this section with a detailed performance analysis. 

\subsection{Implementation}
The prototype of \sys is implemented in C++ based on the machine learning library \textit{mlpack} \cite{mlpack2018}. Mlpack implements the original HT algorithm (also known as Very Fast Decision Tree, VFDT) given in \cite{domingos2000mining}. We modify both the training and inference into matrix-based processes according to our approach. To make the algorithm oblivious, we implemented  oblivious primitives with inline assembly code (as done in \cite{ohrimenko2016oblivious, law2020secure, poddar2020visor}). 

\subsection{Experiment Setup}

\noindent\textbf{Testbed.}
We evaluated the prototype of \sys on a desktop with AVX2 and SGX support, where AVX2 feature is required for $\mathtt{oaccess}$. 
The desktop contains 8 Intel i9-9900 3.1GHZ cores and 32GB of memory (${\thicksim}$93 MB EPC memory), and runs Ubuntu 18.04.5 LTS and OpenEnclave 0.16.0. 

\noindent\textbf{Baselines.} 
To the best of our knowledge, there is no other approach in the wild that can be used for a performance comparison with \sys. Therefore, to better show the performance of \sys, we implemented and evaluated 3 baseline cases named $\mathtt{Insecure}$, $\mathtt{SGX}$, and $\mathtt{Oblivious~SGX}$. 
$\mathtt{Insecure}$ baseline does not provide any protection and performs the traditional HT training and inference in plaintext where each data sample is classified level by level from the root to a leaf node \cite{domingos2000mining}. Note that this baseline is performed  without using SGX enclaves and in plaintext therefore it does not provide any security. $\mathtt{SGX}$ baseline performs the traditional HT training and inference within an enclave but without protecting the access pattern. 
By comparing the performance of the first two baselines, we can see the overhead incurred by using SGX. 
To protect the enclave access pattern, $\mathtt{Oblivious~SGX}$ baseline obviously performs the traditional HT training and inference within the enclave with oblivious primitives. 
We leverage the strategy used in~\cite{law2020secure} for implementing $\mathtt{Oblivious~SGX}$, where the nodes of each level are stored in an array and the target node is obliviously accessed with $\mathtt{oaccess}$. Moreover, dummy nodes are generated to hide the real number of nodes in each level.
 
When outside the enclave, the data samples are encrypted with 128-bit AES-GCM in $\mathtt{SGX}$, $\mathtt{Oblivious~SGX}$, and \sys. 

All the experiment results presented in the following are average over 100 runs. 

\noindent\textbf{Batch size.} 
The batch size $N$ affects the performance of HT training and also the inference accuracy.  
As shown in Fig.~\ref{fig:HT_acc_batchRuntime}, the performance of HT training improves at  the increase of $N$, whereas the accuracy of the model decreases with the increase of $N$. 
\sys processes each batch of data samples in one step, which means the 6 steps of the HT training are performed once every $N$ data samples. 
As a result, less computation is required when $N$ gets larger, yet the best moment to covert leaves to internal nodes could be missed. 
From Fig.~\ref{fig:HT_acc_batchRuntime}, we can also notice that when $N < 128$ the accuracy of the model decreases very slightly (Fig.~\ref{fig:HT_acc_batchRuntime}a) but the decrease in runtime overhead is much more dramatic especially when considering 63 features (Fig.~\ref{fig:HT_acc_batchRuntime}b). 
For $N=100$, the accuracy of the model is almost the same as for $N=1$.
Thus, in the following experiments, we set $N=100$. 


\subsection{Evaluation on Real Datasets}
\begin{table}[!t]
	\centering
	\footnotesize
	\caption{Datasets}
	\label{tab:dataset}
	\begin{tabular}{cccc}
		\toprule
		\textbf{Dataset} & \textbf{\#Features} & \textbf{\#Labels}  & \textbf{\#Samples} \\	\hline
		Adult     & $14$ & $2$ & 32,561  \\	
		REC       & $9$  & $2$ & 5,749,132  \\	
        Covertype & $54$ & $7$ & 581,012 \\	\bottomrule
	\end{tabular}
\end{table}
\begin{table}[!t]
	\centering
	\footnotesize
	\caption{Training runtime on real datasets (s)}
	\label{tab:performance_real}
	\begin{tabular}{cccc}
		\toprule
		\textbf{Scheme} & \textbf{Adult}  & \textbf{REC}  & \textbf{Covertype} \\
		\hline
		$\mathtt{Insecure}$ & 0.09 & 16.06 & 118.29  \\
		$\mathtt{SGX}$ & 0.87 & 98.11 & 773.50  \\
		$\mathtt{Oblivious~SGX}$ & 33.04 & 909.24 & 1772.77 \\
		\textbf{\sys} & 19.05 & 128.28 & 6930.51 \\
		\bottomrule
	\end{tabular}
\end{table}

We first evaluated the performance of HT training with 3 real datasets that are widely used in the literature: Adult dataset, Record Linkage Comparison Patterns (REC) dataset, and Covertype dataset. They are obtained from UCI Machine Learning Repository~\footnote{https://archive.ics.uci.edu/ml/}. 
The details of each dataset are shown in Table~\ref{tab:dataset}.
In particular, we use the Adult and REC datasets to evaluate the performance of HT training, and use the REC dataset to test the performance of RF training, where 100 trees are trained and each tree consists of 7 features. 
The results are shown in Table~\ref{tab:performance_real}. 
For Adult and REC, \sys outperforms $\mathtt{Oblivious~SGX}$ by ${\thicksim}1.7{\times}$ and ${\thicksim}7.1{\times}$, respectively. 
When training with Covertype, \sys's performance is worse than  $\mathtt{Oblivious~SGX}$.
This is because a large query matrix $\mathcal{M}_q$ is required to process the 54 features in Covertype. 


\subsection{Performance of HT Training and Inference}
\label{subsec:HT_eva}
We also evaluated the performance of \sys with synthetic datasets which allow us the flexibility to change the number of data samples and features to better show the performance of \sys under different conditions. The machine learning package, \textit{scikit-multiflow} \cite{skmultiflow}, is employed to generate the streaming data samples in our test. 
The main operation involved in \sys is matrix multiplication, the performance of which is affected by the matrix size and determined by the number of features and number of values of each feature. 
In the following test, we set the number of values to $2$ for all features and modify the matrix size by changing the number of features. 
From the above 3 datasets, we can see that the datasets usual contain dozens of features. 
The observation presented in~\cite{nie2010efficient,wang2020discriminative} also shows that dozens of features, \eg 10, 20, or 30, are usually enough to reflect the distribution of the dataset. 
However, to better analyze the performance of \sys, in our tests we set the number of features to range between $3$ and $127$.

\begin{table}[!t]
	\centering
	\footnotesize
	\caption{The performance of HT inference.}
	\label{tab:inference_info}
	 \renewcommand\tabcolsep{2.0pt} 
	\begin{tabular}{c|cccc}
		\toprule
		{\textbf{\#Data samples}} & \multicolumn{4}{c}{\textbf{HT inference runtime (ms)}} \\	
		 & $\mathtt{Insecure}$ & $\mathtt{SGX}$ & $\mathtt{Oblivious~SGX}$ & \textbf{\sys} \\ \hline 
		$1{\times}10^4$ & 0.05 & 2.79 & 9.78 & 2.36 \\
		$2{\times}10^4$ & 0.05 & 2.89 & 10.56 & 2.43\\
		$3{\times}10^4$ & 0.05 & 2.86 & 14.34 & 2.59\\
		$4{\times}10^4$ & 0.05 & 2.85 & 15.19 & 2.62\\
		$5{\times}10^4$ & 0.05 & 2.89 & 20.24 & 2.80\\
		\bottomrule
	\end{tabular}
\end{table}
\noindent\textbf{Performance of HT training.}
To measure the training performance of \sys, we performed two sets of experiments: 1) first we fixed the number of features while we changed the number of data samples; and 2) we fixed the number of data samples while we changed the number of features. 

In the first set of experiments, we set the number of features to 31, which is large enough to cover most of the data stream scenarios, and changed the number of data samples from $1{\times}10^4$ to $5{\times}10^4$ samples.
In the second test, we fixed the number of data samples to $5{\times}10^4$ and increased the number of features from 3 to 127. 
For the same settings, we compare the performance of \sys with the other three baselines and the results are presented in Fig.~\ref{fig:HT_training_perf}.

Fig.~\ref{fig7:sub1} shows the execution time in seconds to perform the training with fixed features. 
From the results we can see that \sys needs less time than $\mathtt{Oblivious~SGX}$ but more time than $\mathtt{SGX}$.
Precisely, \sys outperforms $\mathtt{Oblivious~SGX}$ by $4.03{\times}$, $3.02{\times}$, $2.86{\times}$, $2.58{\times}$, $2.29{\times}$, but incurs ${\thicksim}6{\times}$, ${\thicksim}9{\times}$, ${\thicksim}10{\times}$, ${\thicksim}11{\times}$, ${\thicksim}13{\times}$ overhead for protecting the access pattern when compared to $\mathtt{SGX}$ for the five cases, respectively.

Fig.~\ref{fig7:sub2} shows the results when we fix the data sample size and vary the number of features. As expected, the training time increases with the increase of the number of features. 
It is interesting to note that for less than 63 feature, \sys execution time is better than $\mathtt{Oblivious~SGX}$. However, with more than 63 features, $\mathtt{Oblivious~SGX}$ outperforms \sys in terms of execution times. 
The main reason of this increase in execution time is the increase in size for the matrices $\mathcal{M}_q^p$, $\mathcal{M}_d$ and $\mathcal{M}_r^p$. These matrices become larger at the increase of the number of features, and this increases the running time for performing the matrix multiplication to get $\mathcal{M}_r$. 

\noindent\textbf{Performance of HT inference.}
To evaluate the performance of inference, we also conducted two sets of experiments: 1) first, we fixed the number of features to $31$ and changed the number of data samples from $1{\times}10^4$ to $5{\times}10^4$; 
and 2) then we fixed the data samples to $5{\times}10^4$ and changed the number of features from $3$ to $127$.
In both sets of experiments, we set the batch size $N'=100$, \ie $100$ data instances are classified with one matrix multiplication. 
The results for both experiment sets are shown in Table~\ref{tab:inference_info} and Fig.~\ref{fig8:sub2}, respectively \footnote{We also provide the results with 15 and 63 features in Appendix~\ref{sub:HT_performance_appendix}.}.

From both Table~\ref{tab:inference_info} and Fig.~\ref{fig8:sub2}, we can see that despite being the most secure of all the other baselines, the HT inference in \sys is very comparable to that of $\mathtt{SGX}$ (\sys performance is even better than $\mathtt{SGX}$ in some cases). 
The results also show that \sys is faster than $\mathtt{Oblivious~SGX}$ (up to ${\thicksim}7.23{\times}$ times). 

\section{Related Work}
\label{sec: related_work}
In this section, we review existing privacy-preserving approaches for general ML algorithms and for data stream classification.   

\subsection{Privacy-preserving Machine Learning} 

\noindent\textbf{Cryptography-based Solutions.} 
Most of the existing privacy-preserving works~\cite{lindell2000privacy,du2002building,vaidya2005privacy,wang2006classification,xiao2005privacy,emekcci2007privacy,samet2008privacy,de2014practical,akavia2019privacy,liu2020towards,akavia2019privacy,liu2020towards} rely on cryptographic techniques, such as SMC and HE. 
Compared with \sys, these schemes require multiple rounds of interaction between different participants. 
The schemes proposed in  \cite{lindell2000privacy,du2002building,vaidya2005privacy,xiao2005privacy,emekcci2007privacy,samet2008privacy} leak the statistical information and/or tree structures to the CSP. 
Moreover, as shown in \cite{ohrimenko2016oblivious}, these cryptographic solutions incur heavy computational overheads. None of these works is suitable for data stream classification.


\noindent\textbf{TEE-based Solutions.} 
In recent years, advances in TEE technology have enabled a set of exciting ML applications such as Haven \cite{baumann2015shielding} and VC3 \cite{schuster2015vc3}. However, TEE solutions (\eg Intel SGX) are vulnerable to a large number of side-channel attacks. 
Decision tree is vulnerable to those attacks as it induces data-dependent access patterns when performing training and inference tasks inside the enclave. 
Raccoon \cite{rane2015raccoon} proposes several mechanisms for data-oblivious execution for TEE to prevent these attacks. 
Ohrimenko et al. \cite{ohrimenko2016oblivious} propose to make the decision tree inference oblivious with oblivious primitives.  
Motivated by \cite{ohrimenko2016oblivious}, Secure XGBoost \cite{law2020secure} makes both the XGBoost model (a variant of the decision tree) training and inference oblivious with oblivious primitives. 
Combing TEE with oblivious primitives can prevent side-channel attacks and achieve better performance than cryptographic-based solutions. 
However, the use of oblivious primitives still leads to prohibitive performance overheads. 
\sys significantly reduces the need of using oblivious primitives because the access pattern to the model is hidden by the use of matrix multiplication. 
We only use oblivious primitives to process the results of the result matrices (\ie $\mathcal{M}_r$ and $\mathcal{M'}_r$)  and to access to the $Leaf$ array. 
Another issue is that both these approaches have not been designed to process data streams. Ohrimenko et al.' solution  only focuses on inferences. Secure XGBoost supports generic decision tree models and is not designed for HT.  


\subsection{Privacy-preserving Data Stream Mining}
In the literature, several works have focused on protecting data stream privacy  \cite{chamikara2019efficient,li2007hiding,zhou2009continuous,kellaris2014differentially}. However, they mainly focus on protecting the data distribution by adding noise. 
In more detail, these works leverage anonymization and data perturbation techniques to perturb the data and thus defend against attacks exploring the relationships across many features in data stream. 

Few works have considered protecting the training process and the generated model in data stream classification. 
For instance, the solution proposed in \cite{xu2008privacy} works on multiple stream sources to build a Na\"ive Bayesian model. 
They minimize the privacy leakage that could be incurred in the data exchange among data owners and do not consider the model privacy.
\cite{wang2019privstream} provides privacy protection for CNN inference with data stream but similarly the privacy of model and training process is not their focus.
While these two works focus on data streams, neither of these two schemes focus on data stream classification using HT. 
Moreover, the main drawback of both approaches is that frequently adding noise reduces the model accuracy which may require frequent reconstructions of the model.
Another issue is that an attacker could infer sensitive information from the data stream, such as the user's identity, the locations a commuter visits and the type of illness a patient suffers from, by deploying various inference-based attacks~\cite{chamikara2019efficient,aggarwal2005k,kellaris2014differentially}.



\section{Conclusion and Future Work}
\label{sec: conclusion}
We presented \sys, a practical, the first privacy-preserving data stream classification framework, which protects user's private information and the target model against access-pattern-based attacks.
\sys adopts novel matrix-based data-oblivious algorithms for the SGX enclave and uses x86 assembly oblivious primitives. 
\sys supports strong privacy guarantees while achieving acceptable performance overhead in privacy-preserving training and inference over data streams. 
As future work to improve \sys  performance, we will investigate two  potential solutions: (a) distribute the computation across multiple enclaves on different machines to perform matrix multiplications in parallel, and (b)  securely outsource the matrix multiplication to GPUs.

\begin{acks}
Russello would like to acknowledge the MBIE-funded programme STRATUS (UOWX1503) for its support and inspiration for this research.
\end{acks}

\bibliographystyle{ACM-Reference-Format}
\bibliography{ref}
\appendix

\section{Security Analysis}
\label{sec:proof}


In this section, we analyse how \sys protects the enclave access pattern along with detailed pseudocode. 

\begin{definition}[Data-oblivious]
As defined in \cite{poddar2020visor},  
we say that an algorithm is \textit{data-oblivious} if an adversary that observes its interaction with memory, disk or network during the executions learns only the public information.
\end{definition}

In the following, we prove both the HT training and HT inference in \sys is data-oblivious. 



\subsection{Oblivious HT Training}
Algorithm~\ref{alg3} provides the pseudocode of oblivious HT training.

\normalem
\begin{algorithm}[!h]
\footnotesize
\DontPrintSemicolon
\caption{Oblivious HT Training}
\label{alg3}
\SetKwFunction{GetPath}{GetPath}
\SetKwFunction{GenerateMasks}{GenerateMasks}
\SetKwFunction{GenQueryMatrix}{GenQueryMatrix}
\SetKwFunction{MatMul}{MatMul}
\SetKwFunction{RecordStat}{RecordStat}
\SetKwFunction{UpdateStat}{UpdateStat}
\SetKwFunction{MajorityClass}{MajorityClass}
\SetKwFunction{SplitCheck}{SplitCheck}
\SetKwFunction{CreateChildren}{CreateChildren}

\KwIn{$N$ encrypted data samples $Enc.D$, $S$, $V$, $m_{i}$, $M$}
Initialize the model matrix $\mathcal{M}_t$ and leaves array $Leaf$ with $P$ dummy objects. 
Initialize the bit string $isDummy$. 
Initialize a list $node$, where $node[p]=(fIdx, \tau_p)$ for $p \in [1, P]$. $node[p].fIdx$ stores the indices of the features that have not assigned on $\mathcal{M}_t[;p]$; and $node[p].\tau_p$ stores the number of feature values assigned to $\mathcal{M}_t[;p]$\; \label{alg3:line1}

\nonl \emph{\% Generate data sample matrix $\mathcal{M}_d$}\;
Decrypt $Enc.D$ and pack them into a $N{\times}M$ matrix $\mathcal{M}_d$\; \label{alg3:line3}

\nonl \emph{\% Generate query matrix $\mathcal{M}_q$}\;
\ForEach{$p \in [1,P]$} 
{
    $T_{p} = \mathtt{oaccess}$($node[p].fIdx, S, V$), where $V=(V_{s_1}, ..., V_{s_d})$\;\label{alg3:line7}
    $\mathcal{M}_q^{p}$ = \GenerateMasks{$\mathcal{M}_t[; p]$, $isDummy$, $T_{p}$}\;\label{alg3:line8}
}
$\mathcal{M}_q = \mathcal{M}_q^{1} || \cdots || \mathcal{M}_q^{P}$ \; \label{alg3:line10}
\nonl \emph{\% Update $node$, $Leaf$, and $\mathcal{M}_t$ using $\mathcal{M}_r$}\;

$\mathcal{M}_r$ = \MatMul{$\mathcal{M}_d$, $\mathcal{M}_q$}\;\label{alg3:line11}
$output$=[]\;\label{alg3:line12}
\ForEach{$p \in [1,P]$}
{
    $output$ = \RecordStat{$\mathcal{M}_r$,$node[p].\tau_p$}\;\label{alg3:line13}
    \UpdateStat{$isDummy$, $output$, $Leaf$}\;\label{alg3:line15}
    \nonl \emph{\% Check for a split}\;
    splitIdx = \SplitCheck{$isDummy$, $S$, $Leaf$}\;\label{alg3:line16}
    \nonl \emph{\% Generate new leaf nodes, update $l$, $Leaf$, and $\mathcal{M}_t$}\;
    $isSplit$ = (splitIdx == (-1))\;\label{alg3:line17}
    \CreateChildren{$isSplit, node, \mathcal{M}_t, Leaf$}\;\label{alg3:line18}
}
\end{algorithm}

\begin{theorem}
The oblivious HT training of \sys (Algorithm~\ref{alg3}) is data-oblivious with public parameters: $N$, $P$, $d$ and $M$.
\end{theorem}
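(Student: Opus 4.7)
The plan is to establish data-obliviousness by decomposing Algorithm~\ref{alg3} into its constituent subroutines and showing that the memory access trace of each one is a function only of the public parameters $N$, $P$, $d$, $M$ (together with the fixed values $m_i$ derived from $d$ and the feature specification established during setup). I would structure the argument as a simulation-style claim: there exists a simulator that, given only these public parameters, produces a memory/disk/network trace that is identical to the trace of the real execution, regardless of the private data samples or the current state of the tree. Since composition preserves obliviousness, it suffices to verify each line of the algorithm individually.

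First I would handle the structural steps. Line~\ref{alg3:line1} allocates $\mathcal{M}_t$, $Leaf$, $isDummy$, and $node$ at fixed sizes determined by $P$ and $d$, and line~\ref{alg3:line3} decrypts $N$ ciphertexts and packs them into an $N \times M$ matrix; neither step branches on data. For the query-matrix construction (lines~\ref{alg3:line7}--\ref{alg3:line10}), the loop iterates exactly $P$ times and on each iteration calls $\mathtt{oaccess}$ (whose cache-line scan pattern depends only on the array length) followed by $\mathtt{GenerateMasks}$, which must be shown to always emit an $M \times L'$ block regardless of which features on the path are assigned versus unassigned. The trick here is that the column $\mathcal{M}_t[;p]$ and the $isDummy$ bit enter $\mathtt{GenerateMasks}$ only through oblivious primitives, so the size and layout of $\mathcal{M}_q^p$ are identical for real and dummy columns. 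Concatenation on line~\ref{alg3:line10} then produces a fixed-size $\mathcal{M}_q$, and the matrix multiplication on line~\ref{alg3:line11} has an access pattern fully determined by the public dimensions.

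For the update phase (lines~\ref{alg3:line12}--\ref{alg3:line18}), I would argue that $\mathtt{RecordStat}$ linearly scans $\mathcal{M}_r$ using $\mathtt{oequal}$ against $\tau_p$, that $\mathtt{UpdateStat}$ writes back to $Leaf$ via $\mathtt{oassign}$ over every entry regardless of whether the column is dummy, and that $\mathtt{SplitCheck}$ computes the two largest IG values through an oblivious linear scan, returning the chosen feature index through $\mathtt{oselect}$. Finally $\mathtt{CreateChildren}$ touches all $P$ columns of $\mathcal{M}_t$ and all $P$ entries of $Leaf$ uniformly, conditionally overwriting them via $\mathtt{oassign}$ gated on the packed condition $isSplit \wedge isDummy$. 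Because every branch is converted into an oblivious conditional move rather than a data-dependent jump, the instruction and data address trace is the same in the split and no-split cases.

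The main obstacle will be the dummy-column bookkeeping, which is the only place where ``state'' can potentially leak: one must show that the choice of which dummy slots are promoted to real columns upon a split does not manifest in the access pattern, and that the periodic replenishment triggered by the threshold $T$ and counter $\gamma$ (described in Section~\ref{sub:HT_train}) occurs on a schedule that depends only on public parameters, not on how many splits actually happened. For the former, $\mathtt{CreateChildren}$ should scan all $P$ columns and use $\mathtt{oassign}$ to write the new feature values into the first $m_a - 1$ columns flagged as dummy, so that the write pattern is a full sweep rather than a targeted update. For the latter, since $\gamma$ and $T$ are constants fixed at setup and the replenishment itself enlarges $P$ in a way visible to the adversary (and therefore already public), the schedule reveals nothing beyond $P$. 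Once these two points are discharged, composing the obliviousness of each subroutine yields the theorem, with the leaked information consisting of exactly $N$, $P$, $d$, $M$.
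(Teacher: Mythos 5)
Your proposal is correct and follows essentially the same route as the paper's own proof: a line-by-line decomposition of Algorithm~\ref{alg3} showing that each subroutine's access pattern is fixed by $N$, $P$, $d$, $M$, with all data-dependent choices (real vs.\ dummy columns, split vs.\ no-split) hidden behind the oblivious primitives $\mathtt{oaccess}$, $\mathtt{oequal}$, $\mathtt{oselect}$, and $\mathtt{oassign}$. Your simulation framing and your explicit treatment of the dummy-replenishment schedule go slightly beyond what the paper writes down, but they do not change the argument's structure.
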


\begin{proof}
Here we analyse what the adversary can learn from each operation in Algorithm~\ref{alg3}. 

The memory access occurred due to the initialization (line~\ref{alg3:line1}) and $\mathcal{M}_d$ generation (line~\ref{alg3:line3}) is independent of the data, from which the adversary could only learn the size information $P$, $N$ and $M$, which are public.

The loop from line~3 to line~\ref{alg3:line8} aims to traverse $\mathcal{M}_t$ and $node$ and generate the query matrix for each column of $\mathcal{M}_t$ . 
This loop always runs $P$ times, which means all the columns and elements of $\mathcal{M}_t$ and $node$ respectively are always accessed for each round of training, resulting the same access pattern no matter what the input is. 
Recall that the query matrix is generated based on the feature values assigned and those unassigned to the column. 
Within the loop, the enclave first fetches the values of unassigned features indexed by $node[p].fIdx$ from $S$ and $V$ using $\mathtt{oaccess}$ and stores them into $T_p$ (line~\ref{alg3:line7}).
Although $node[p].fIdx$ is different for different columns, the access patterns over $S$ and $V$ occurred by $\mathtt{oaccess}$ are oblivious and are independent of $node[p].fIdx$. 
The function $\mathtt{GenerateMasks}$ in line~\ref{alg3:line8} generates the query matrix based on the values in $T_p$ and $\mathcal{M}_t[; p]$.
$\mathcal{M}_t[; p]$ is obtained with $\mathtt{oaccess}$, which is also oblivious. 
Here the enclave generates query matrix in the same way for real and dummy columns. The difference is that the enclave assigns null to the masks for dummy columns, but the values in $T_p$ for real columns, however there is no way for the adversary to learn that. 
After the loop, the query matrix $\mathcal{M}_q$ of the whole tree is generated by combining the matrix of each path together.

Once $\mathcal{M}_d$ and $\mathcal{M}_q$ are ready, the next step is to perform the matrix multiplication, which is inherently oblivious, and obliviously access the result matrix $\mathcal{M}_r$ with oblivious primitives. 

The second loop (line~\ref{alg3:line13}-line~\ref{alg3:line18}) is used to update the statistic information stored in $Leaf$ and update $\mathcal{M}_t$ and $Leaf$ if they are leaves that need to be converted. 
The function $\mathtt{RecordStat}$ in line~\ref{alg3:line13} checks the elements in each column of $\mathcal{M}_r$ with $node[p].\tau_p$ and records the counts into a vector $output$. 
This process is performed obliviously with $\mathtt{oequal}$ and $\mathtt{oselect}$, resulting the access pattern over $\mathcal{M}_r$ and $output$ independent of any value.
In line~\ref{alg3:line15}, the enclave uses $\mathtt{oassign}$ to update $Leaf$ based on $output$. 
Here no matter whether the array is real or dummy, the enclave processes it with $\mathtt{oassign}$. The difference is that dummy arrays are assigned with $0$, but real arrays are assigned with the values recorded in $output$.
What the adversary observes from this process is all the same.

In line~\ref{alg3:line16}, the enclave checks whether to split the $p$-th leaf based on the updated $Leaf$. 
Precisely, the enclave first calculates the IG for all unassigned features. The enclave next uses $\mathtt{ogreater}$, $\mathtt{oequal}$ and $\mathtt{oselect}$ to select the feature with the highest and second-highest IG, return a value $splitIdx$ that indicates if $p$-th leaf is split by comparing with Hoeffding Bound (using $\mathtt{oselect}$). 
Its access patterns are thus independent of $S$. 

If $node[p]$ is real and its IG values satisfy the Hoeffding Bound, 
line~\ref{alg3:line18} converts the $p$-th leaf into internal nodes by updating $node$, $\mathcal{M}_t$ and $Leaf$ accordingly. 
The main idea is to convert $node[p]$, $\mathcal{M}_t[;p]$, and $Leaf[p]$ into dummies by resetting $isDummy$. 
Moreover, assume the best feature selected for converting the $p$-th leaf has $m$ values, $m$ dummies in $node$, $\mathcal{M}_t$ and $Leaf$ are converted into real ones by setting their values based on the new leaves and paths with oblivious primitives. 
If either $node[p]$ is dummy or it is not ready to be converted, the enclave similarly performs dummy write operations on $node$, $\mathcal{M}_t$ and $Leaf$, which is indistinguishable from the operations performed for the former case due to the oblivious primitives.

Overall, from Algorithm~\ref{alg3} the adversary can only learn the public information $N$, $P$, $d$ and $M$. 
\end{proof}

\subsection{Oblivious HT Inference}
\label{appendix:inference}

In this section, we provide pseudocode along with proofs of security for the oblivious HT inference in Algorithm~\ref{alg4}.

\begin{algorithm}[!h]
\footnotesize
\DontPrintSemicolon
\caption{Oblivious HT Inference}
\label{alg4}
\SetKwFunction{MajorityLabel}{MajorityLabel}
\SetKwFunction{GetIndiceLabel}{GetIndiceLabel}
\SetKwFunction{GenTreeMatrix}{GenTreeMatrix}
\SetKwFunction{MatMul}{MatMul}
\SetKwFunction{RecordStat}{RecordStat}
\SetKwFunction{Predict}{Predict}

\KwIn{$N'$ encrypted data instances $Enc.D$, $m_{i}$, $M$, $d$, $\mathcal{M}_t$, $Leaf$}

Decrypt the unlabelled instances and pack them into a $N'{\times}(M-m_d)$ matrix $\mathcal{M}_i$\; \label{alg4:line1}
Initialize label array $A_{label}$ of size $P$ for storing labels\; \label{alg4:line2}

\nonl \emph{\% Store labels in an array}\;
\ForEach{$p \in [1, P]$}
{
    $A_{label}$ = \MajorityLabel{$Leaf[p]$}\;\label{alg4:line4}
}

\nonl \emph{\% Record counts for each instance using $\mathcal{M}_r'$}\;
$\mathcal{M}_r'$ = \MatMul{$\mathcal{M}_i$,$\mathcal{M}_t$}\;\label{alg4:line5}
$output$=[]\;\label{alg4:line6}
$output$ = \RecordStat{$\mathcal{M}_r$}\;\label{alg4:line7}
\nonl \emph{\% Compare values in $output$ and assign labels to instances}\;
$Result$=[]\; \label{alg4:line8}
$Result$ = \Predict{$output$,$A_{label}$}\; \label{alg4:line9}
\Return{$Result$}\;
\end{algorithm}

\begin{theorem}
The oblivious HT inference of \sys (Algorithm~\ref{alg4}) is data-oblivious, with public parameters $N'$, $P$ and $M$.
\end{theorem}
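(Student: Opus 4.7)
The plan is to proceed by the same style of argument used for Algorithm~\ref{alg3}: walk through Algorithm~\ref{alg4} line by line and show that every memory access, every control-flow branch, and every write depends only on the public sizes $N'$, $P$, $M$ (and the derived constants $m_d$, $d$), and never on the ciphertexts, the plaintext instances, the entries of $\mathcal{M}_t$, or the frequency counts in $Leaf$. Since an adversary in our threat model only observes the enclave's interaction with memory, the network, and disk at cache-line granularity, showing this suffices to establish data-obliviousness.

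First I would handle the bulk input-processing steps (lines~\ref{alg4:line1}--\ref{alg4:line2}): decryption of a fixed number $N'$ of ciphertexts of fixed length and packing them row-by-row into an $N'\times(M-m_d)$ bit-matrix $\mathcal{M}_i$ touches memory in a strictly linear, value-independent order, as does allocating $A_{label}$ of size $P$. The loop in lines~3--\ref{alg4:line4} always iterates exactly $P$ times, and inside each iteration \texttt{MajorityLabel} scans the full $Leaf[p]$ array of length $L$ and uses $\mathtt{ogreater}$/$\mathtt{oselect}$ to keep the running maximum and corresponding label: this is the standard oblivious-argmax pattern, so its trace depends only on $|Leaf[p]|=L$, which is itself a public function of $d$ and the $m_i$. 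Crucially, the same oblivious scan is executed for dummy leaves (marked by $isDummy$) as for real ones; the distinction is absorbed into the oblivious writes and never branches the control flow.

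Next I would argue about the matrix multiplication on line~\ref{alg4:line5}. A naive triple-loop matrix multiplication accesses all entries of $\mathcal{M}_i$ and $\mathcal{M}_t$ in a fixed order determined only by their dimensions $N'\times(M-m_d)$ and $(M-m_d)\times P$, so its memory-access trace is trivially data-independent. Then \texttt{RecordStat} on line~\ref{alg4:line7} linearly scans $\mathcal{M}_r'$ and, for each entry $\mathcal{M}_r'[n,p]$, uses $\mathtt{oequal}$ to compare against $\tau_p$ (where $\tau_p$ is obtained via $\mathtt{oaccess}$ from the public-length metadata) and $\mathtt{oassign}$/$\mathtt{oselect}$ to record which column matched; the resulting $output$ structure has a size fixed by $N'$ and $P$, and is written with the same oblivious-write discipline regardless of whether any row actually matches. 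Finally \texttt{Predict} on line~\ref{alg4:line9} uses $output$ together with $A_{label}$ via $\mathtt{oaccess}$/$\mathtt{oselect}$ to emit one label per instance into $Result$, which is then encrypted and returned; again the access pattern depends only on $N'$ and $P$.

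The subtle step, and the one I would treat with most care, is showing that dummy columns of $\mathcal{M}_t$ and dummy entries of $Leaf$ are handled indistinguishably from real ones during \texttt{MajorityLabel}, \texttt{RecordStat}, and \texttt{Predict}. The key observation is that the $isDummy$ bit is never used to short-circuit a loop or to skip a memory region; it is consumed only as an operand of $\mathtt{oselect}$/$\mathtt{oassign}$, so dummy iterations perform the same reads and writes as real ones and merely overwrite the destination with a null value. Combining this with the fact that $\mathcal{M}_i\times\mathcal{M}_t$ is performed in full, the composition of these oblivious building blocks yields a trace that is a deterministic function of $(N',P,M)$ alone, which is exactly the claimed notion of data-obliviousness from the definition cited from~\cite{poddar2020visor}.
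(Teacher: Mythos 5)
Your proposal is correct and follows essentially the same route as the paper's proof: a line-by-line walkthrough of Algorithm~\ref{alg4} showing that each step's memory trace is a function only of the public sizes $N'$, $P$, $M$ (and derived quantities), with the matrix multiplication being inherently oblivious and the remaining accesses hidden by the oblivious primitives. Your additional care about dummy leaves and the non-branching use of $isDummy$ is a welcome elaboration of a point the paper treats only implicitly in the inference proof, but it does not constitute a different argument.
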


\begin{proof}
The access patterns of line~\ref{alg4:line1} depend only on the number of instances $N'$ and $M-m_d$. 
Line~\ref{alg4:line2} depends on $P$. 

The loop in line3 and line~\ref{alg4:line4} is used to determine each leaf's label of the current tree, which executes $P$ times. 
Within function $\mathtt{MajorityLabel}$, the enclave only uses oblivious primitives, which does not leak any access patterns. 
Thus, the adversary could only learn $P$.

In line~\ref{alg4:line5}, the access patterns occurred by the matrix multiplication is inherently oblivious. 

The function $\mathtt{RecordStat}$ in line~\ref{alg4:line7} checks the elements of each column in $\mathcal{M}_r'$ and records the counts into $output$. 
Similarly, the two operations are both performed with oblivious primitives, which do not leak access patterns. 
The function $\mathtt{Predict}$ in line~\ref{alg4:line9} first compares the values in $output$ using $\mathtt{oequal}$. 
It then accesses the $A_{label}$ to get the target label and assigns it to the corresponding instances using $\mathtt{oassign}$. 
In this process, the adversary could only learn $N'$ and $P$.

\end{proof}

\section{Performance of \sys}
\label{sub:HT_performance_appendix}

\begin{figure*}[htbp]
  \centering
    \begin{subfigure}{0.49\textwidth}
      \centering   
      \includegraphics[width=0.85\linewidth]{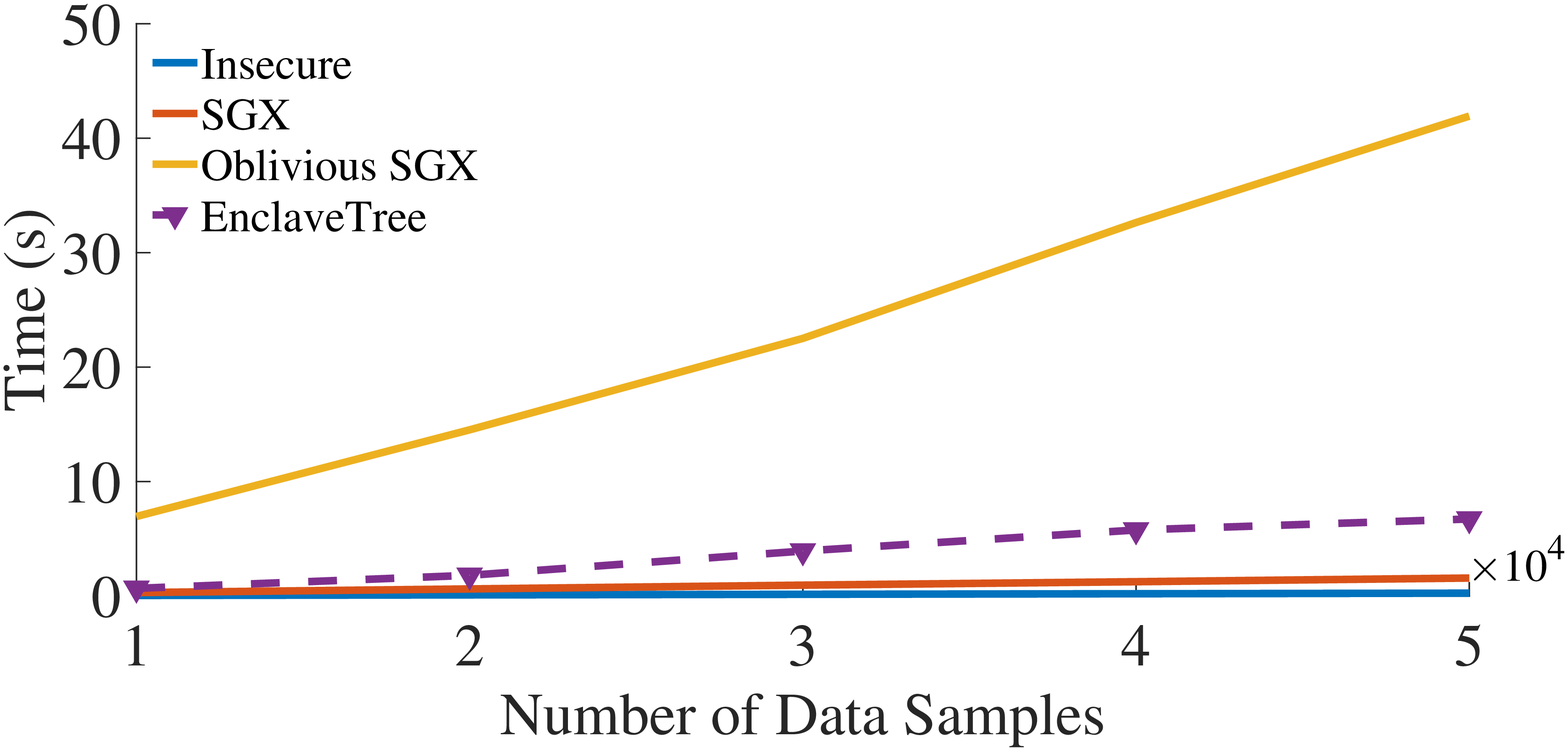}
        \caption{Training}
        \label{fig12:sub1}
    \end{subfigure}   
    \begin{subfigure}{0.49\textwidth}
      \centering   
      \includegraphics[width=0.85\linewidth]{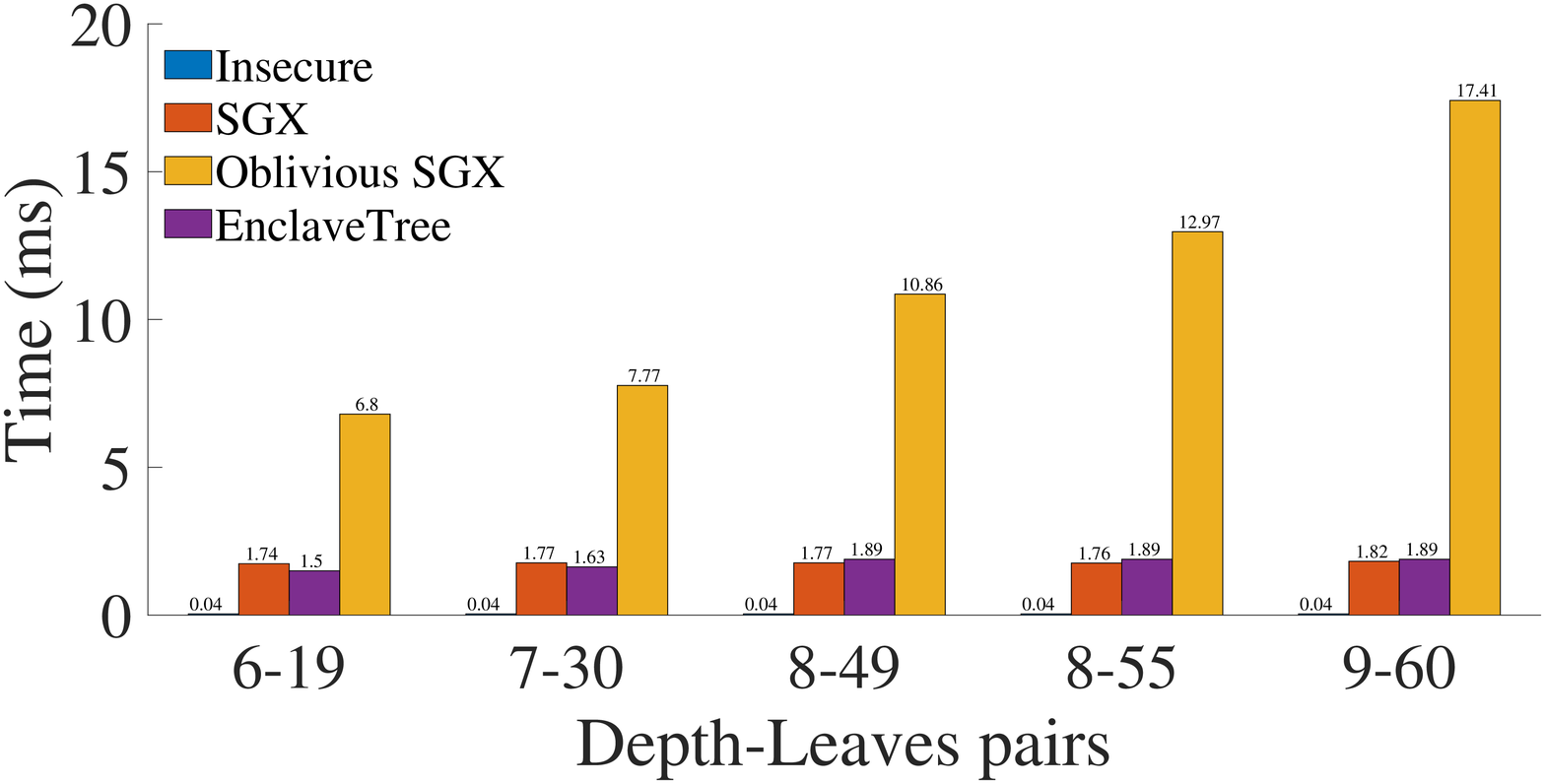}
        \caption{Inference}
        \label{fig12:sub2}
    \end{subfigure}
\caption{
\label{fig12}
The Comparison of HT Training and Inference with 15 features
}
\end{figure*}

\begin{figure*}[htbp]
  \centering
    \begin{subfigure}{0.49\textwidth}
      \centering   
      \includegraphics[width=0.85\linewidth]{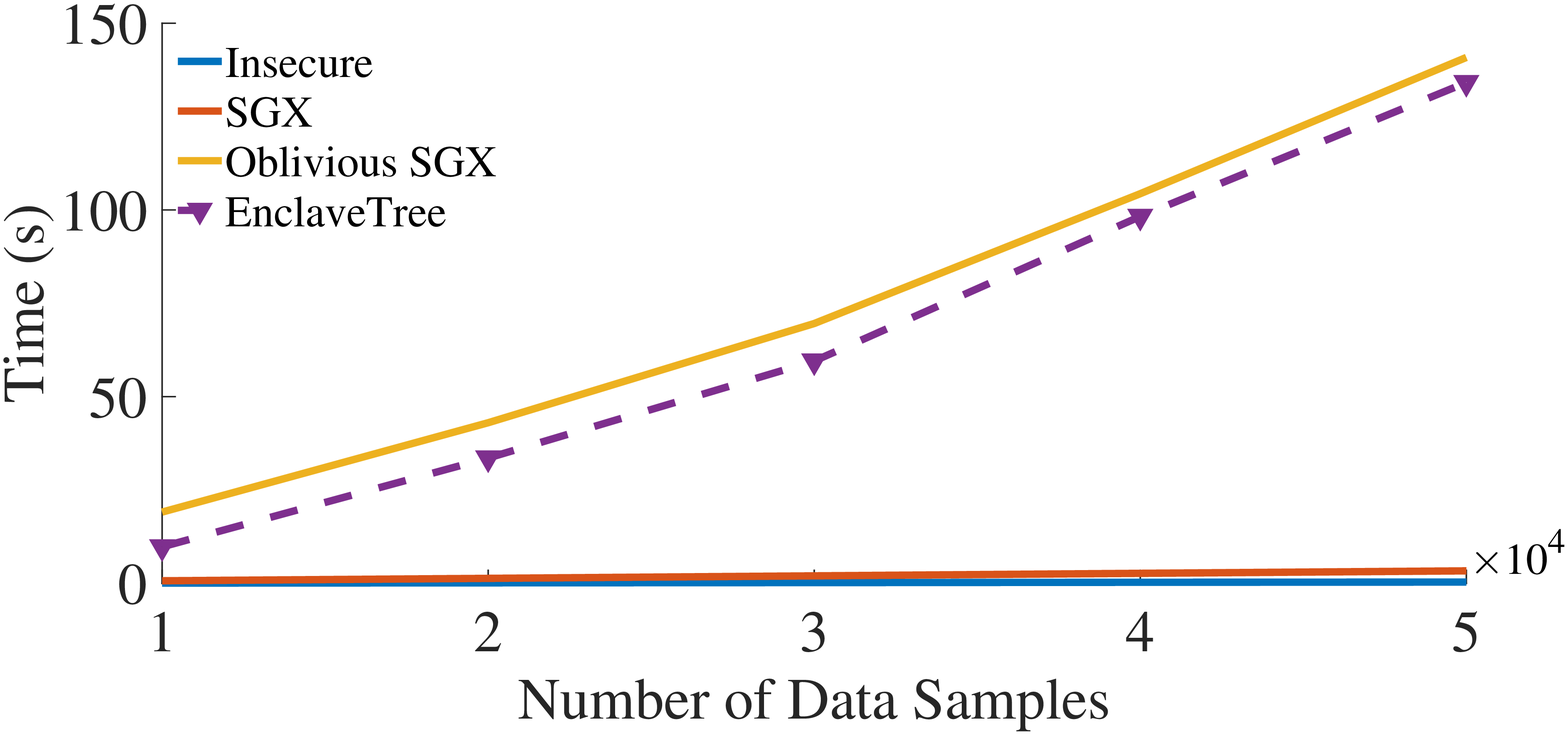}
        \caption{Training}
        \label{fig13:sub1}
    \end{subfigure}   
    \begin{subfigure}{0.49\textwidth}
      \centering   
      \includegraphics[width=0.85\linewidth]{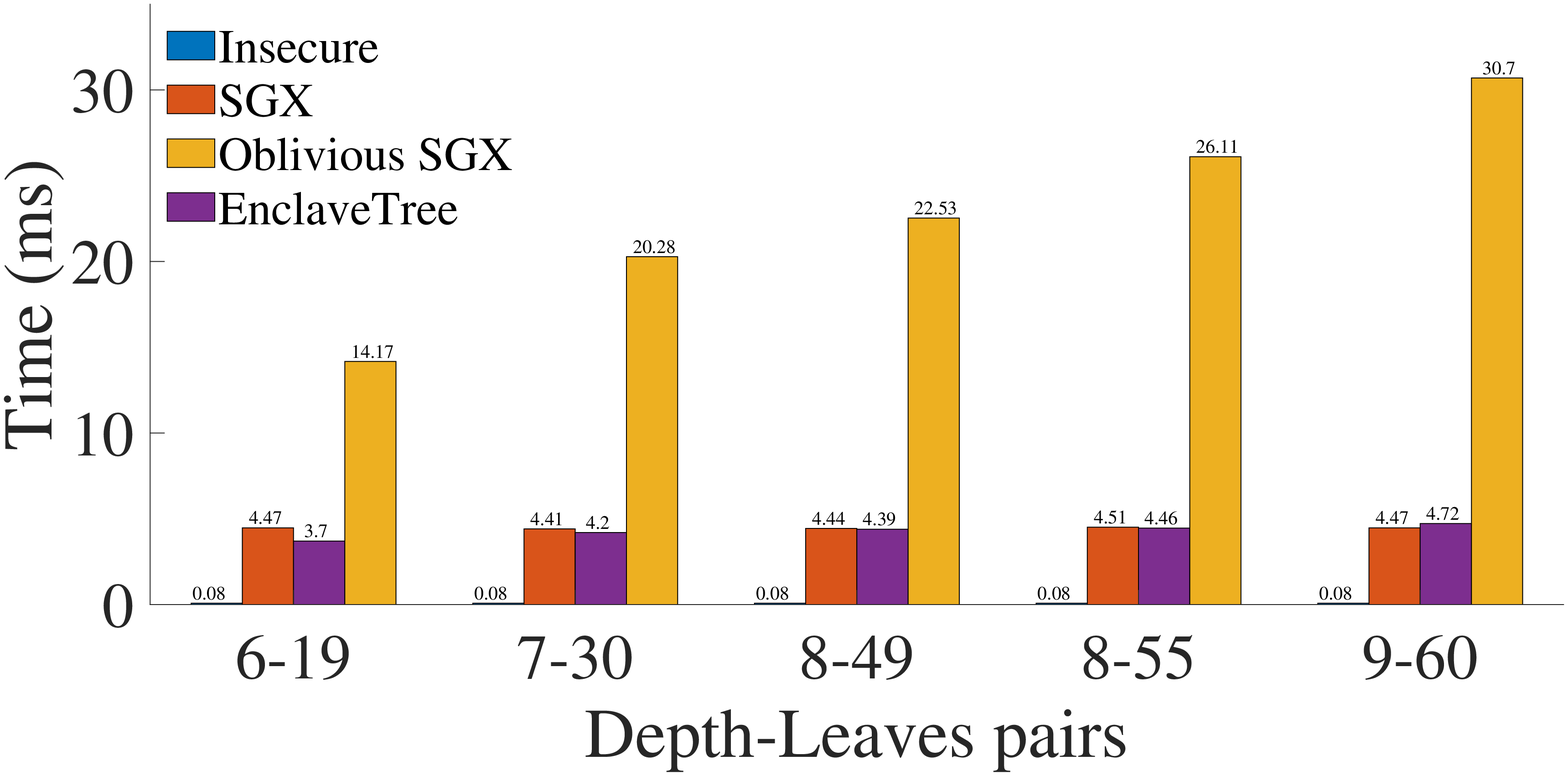}
        \caption{Inference}
        \label{fig13:sub2}
    \end{subfigure}
\caption{
\label{fig13}
The Comparison of HT Training and Inference with 63 features
}
\end{figure*}

\begin{figure*}[htbp]
  \centering
    \begin{subfigure}{0.49\textwidth}
      \centering   
      \includegraphics[width=0.85\linewidth]{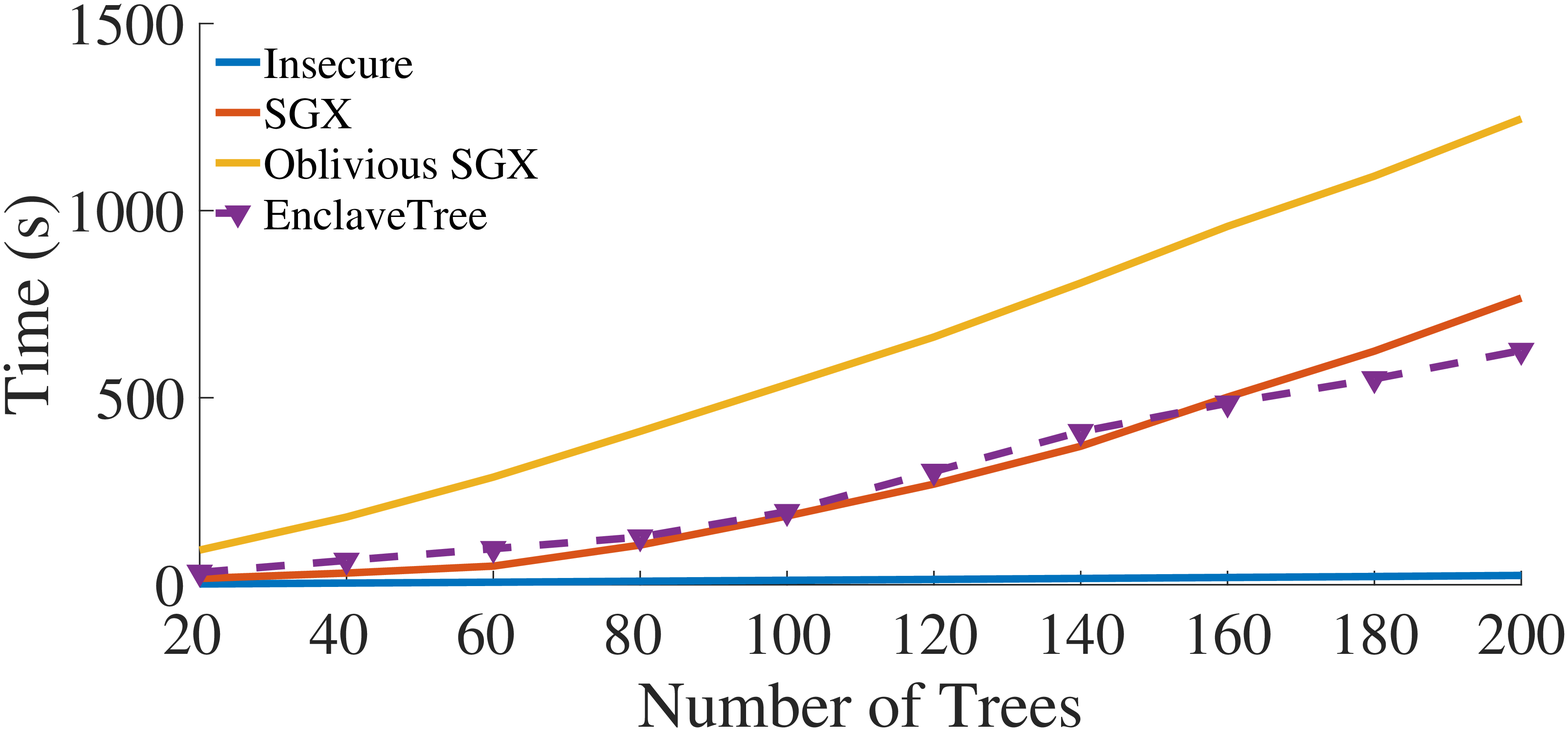}
        \caption{Training Runtime}
        \label{fig14:sub1}
    \end{subfigure}   
    \begin{subfigure}{0.49\textwidth}
      \centering   
      \includegraphics[width=0.85\linewidth]{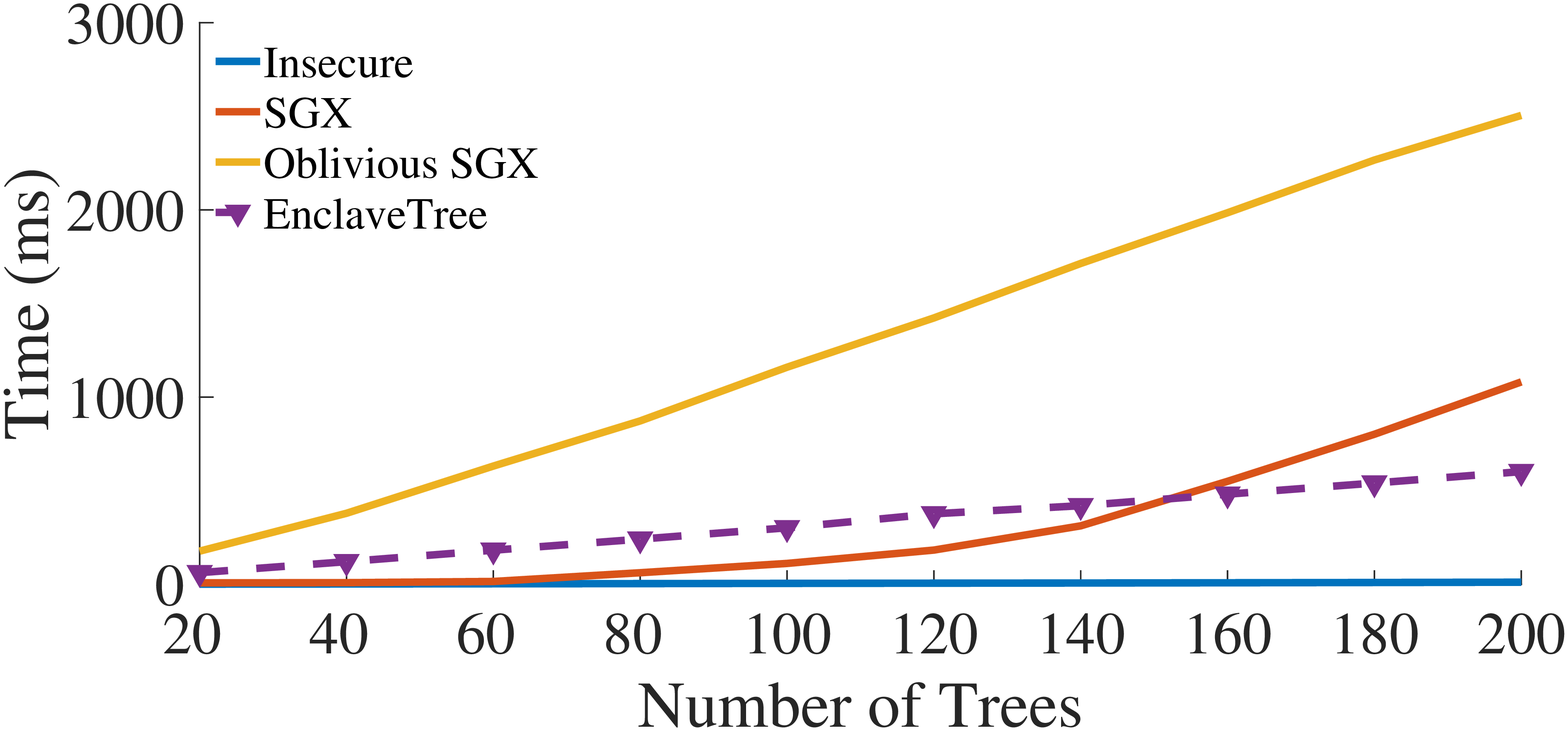}
        \caption{Inference Runtime}
        \label{fig14:sub2}
    \end{subfigure}   
\caption{
\label{fig:RF_perf_result}
RF training and inference with the increase of trees
}
\end{figure*}


\subsection{More Results for HT Training and Inference}
Here we show the performance of HT training and inference with 15 features in Fig.~\ref{fig12} and 63 features in Fig.~\ref{fig13}.
It is indicated that HT training performs better with less number of features, which is close to $\mathtt{SGX}$ when there are 15 features. However, when the number of features increases to 63, the runtime of HT training is close to $\mathtt{Oblivious~SGX}$. 
The fact is that \sys is efficient to process the data streams in most scenarios as they generally involve about a dozen of features.  
Regarding the inference, as shown in Fig.~\ref{fig13}, our solution always outperforms $\mathtt{Oblivious~SGX}$ baseline by several orders of magnitude.

\subsection{RF Training and Inference}
\label{subsec:RF}
One concern of training data streams with HT is that the underlying data distribution of the stream might change over time, which leads to the accuracy degradation of the model, known as \textit{concept drift} \cite{gama2014survey}. Ensemble models such as \textit{Random Forest} (RF) with \textit{adaptive mechanisms} \cite{gomes2017adaptive} is a promising way to cope with the problem of concept drifts.

RF consists of a set of trees, and each tree is trained over a $\sqrt{d}$ subset of $S$ features. 
\sys uses the HT training component to train each tree in the RF. 
The features used to train a tree is randomly selected from $S$. 
To make the selection oblivious, the enclave accesses $S$ using $\mathtt{oaccess}$. 
Assigning a label to a data instance with RF inference means classifying the instance with each tree and getting a set of labels. The final result is the label that is output by the majority of trees. 

\sys performs the RF inference in a way similar to the HT inference using  matrix multiplication. 
In particular, the data instances can be classified by multiple trees with one matrix multiplication by combining the matrices of the trees together. 
 
We also evaluated the performance of RF training and inference, and the results are shown in Fig.~\ref{fig:RF_perf_result}. 
Fig.~\ref{fig14:sub1} shows the runtime in seconds to perform the RF training with 31 features and $5{\times}10^4$ samples. For all the test cases, every tree of the RF is trained with 6 features. 
The results show that, with the increase of trees, \sys is much faster than $\mathtt{Oblivious~SGX}$, which is up to ${\thicksim}3.2{\times}$. 
We also see that the performance of \sys is close to $\mathtt{SGX}$. 

With the same setting, we compare the inference performance of \sys with the other three baselines and the result is shown in Fig.~\ref{fig14:sub2}. 
We can see that \sys also performs better than $\mathtt{Oblivious~SGX}$ by roughly $3.8{\times}$. 
Compared with $\mathtt{SGX}$, \sys inference incurs more overhead when there are less than about 150 trees but is better when there are more than 150 trees. 
The reason is that the inference process requires EPC memory to store data, and it causes EPC paging when the EPC is exhausted. 
\sys simply performs matrix multiplication, and this operation involves much less memory access than $\mathtt{SGX}$, which means less EPC paging occurred than $\mathtt{SGX}$. 



\end{document}